\documentclass[a4paper,twocolumn,11pt,amssymb,accepted=2022-06-24]{quantumarticle}
\pdfoutput=1
\usepackage[utf8]{inputenc}
\usepackage[english]{babel}
\usepackage[T1]{fontenc}
\usepackage{amsmath}
\usepackage{hyperref}
\usepackage{amsthm}
\usepackage{tikz}
\usepackage{lipsum}

\newtheorem{theorem}{Theorem}

\begin{document}

\title{Finding out all locally indistinguishable sets of generalized Bell states}

\author{Jiang-Tao Yuan}
\email{jtyuan@hpu.edu.cn}
\affiliation{College of Science, Wuxi University, Wuxi, 214105, China}
\affiliation{School of Mathematics and Information Science, Henan Polytechnic University, Jiaozuo, 454000, China}
\orcid{0000-0002-2960-5199}
\author{Ying-Hui Yang}
\email{yangyinghui4149@163.com}
\affiliation{School of Mathematics and Information Science, Henan Polytechnic University, Jiaozuo, 454000, China}
\orcid{0000-0003-1555-4746}
\author{Cai-Hong Wang}
\email[]{chwang@hpu.edu.cn}
\affiliation{College of Science, Wuxi University, Wuxi, 214105, China}
\affiliation{School of Mathematics and Information Science, Henan Polytechnic University, Jiaozuo, 454000, China}
\orcid{0000-0002-3696-2899}
\maketitle

\begin{abstract}
In general, for a  bipartite quantum system $\mathbb{C}^{d}\otimes\mathbb{C}^{d}$ and an integer $k$ such that $4\leq k\le d$,
there are few necessary and sufficient conditions for local discrimination of sets of $k$ generalized Bell states (GBSs) and
it is difficult to locally distinguish $k$-GBS sets.
The purpose of this paper is  to completely solve the problem of local discrimination of GBS sets in some bipartite quantum systems.
Firstly three practical and effective sufficient conditions are given,
Fan$^{,}$s and Wang et al.$^{,}$s results [Phys Rev Lett 92, 177905 (2004); Phys Rev A 99, 022307 (2019)] can be deduced as special cases of these conditions.
Secondly in $\mathbb{C}^{4}\otimes\mathbb{C}^{4}$, a necessary and sufficient condition for local discrimination of GBS sets is provided,
and a list of all locally indistinguishable 4-GBS sets is provided,
and then the problem of local discrimination of GBS sets is completely solved.
In $\mathbb{C}^{5}\otimes\mathbb{C}^{5}$, a concise necessary and sufficient condition for one-way local discrimination of GBS sets is obtained,
which gives an affirmative answer to the case $d=5$ of the problem proposed by Wang et al.
\end{abstract}

\section{Introduction}
In quantum mechanics, any set of orthogonal quantum states can be discriminated globally.
In general, for a bipartite composite system $\mathbb{C}^{d}\otimes\mathbb{C}^{d}$,
local operations and classical communication (LOCC)
is not sufficient to distinguish sets of orthogonal quantum states \cite{benn1999pra,walg2000prl,walg2002prl,gho2001prl},
that is, it is a difficult task to locally distinguish sets of orthogonal states.
Any two orthogonal states can be perfectly (or deterministically) distinguished by LOCC \cite{walg2000prl},
a complete set of orthogonal maximally  entangled states (MESs) is locally indistinguishable (deterministically or probabilistically),
and $d+1$ or more MESs in $\mathbb{C}^{d}\otimes\mathbb{C}^{d}$ are locally indistinguishable \cite{horo2003prl,fan2004prl,fan2007pra,gho2004pra,nath2005jmp},
and any three generalized Bell States (GBSs) in $\mathbb{C}^{d}\otimes\mathbb{C}^{d}\ (d\geq 3)$ are always locally distinguishable \cite{wang2017qip}.
The nonlocal nature of quantum information is revealed when a set of orthogonal states
of a composite quantum system cannot be perfectly distinguished by LOCC.
The study of local quantum state discrimination is widely
applied to many fields such as data hiding, quantum secret
sharing, and quantum private query \cite{divi2002trans-inform}-\cite{wei2018trans-comp}.
It is also helpful to explore the power and limitations of LOCC which is of inherent interest as a tool to understand entanglement and  nonlocality.

According to the nature of classical communication,
there is a special case of LOCC: one-way LOCC.
In one-way LOCC, the classical communication is originated by only one fixed part and no information is allowed to move in the other direction.
Similarly a general (full) LOCC means local operations and classical communication, that is,
Alice and Bob can communicate classically as much as they like and iteratively adapt their measurements as they go.
It is natural to regard (full) LOCC protocol as superior to one-way LOCC protocol in terms of state discrimination.
Since we can give an explicit mathematical description of one-way LOCC \cite{nath2013pra},
but can't give a clear mathematical description of (full) LOCC,
most of known LOCC distinguishable sets of orthogonal states are one-way LOCC distinguishable sets.

It is natural to ask whether or not a set of $k(4\leq k\leq d)$ orthogonal MESs in $\mathbb{C}^{d}\otimes\mathbb{C}^{d}$
can be (perfectly) distinguished by LOCC \cite{band2011njp}.
Many advances have been made in this field.
The researchers have constructed many examples of locally indistinguishable GBS sets
\cite{band2011njp,yu2012prl,zhang2015pra,wang2016qip,yang2018pra,yuan2019qip},
for example, Bandyopadhyay et al. \cite{band2011njp} showed one-way LOCC indistinguishable 4-GBS sets
in $\mathbb{C}^{4}\otimes\mathbb{C}^{4}$ and $\mathbb{C}^{5}\otimes\mathbb{C}^{5}$.
On the other hand, some sufficient conditions for local discrimination of GBSs are obtained
by using admissible solution set and nonadmissible solution set \cite{wang2019pra,yang2021qip}.

Although many advances have been made,
they are mainly concrete examples of local indistinguishability and sufficient conditions for local distinguishability.
There are few necessary and sufficient conditions for  local discrimination of orthogonal state sets,
which makes it difficult to locally distinguish  a general GBS set.
Even for a general GBS set in a low dimensional bipartite system,
it is difficult to judge its local distinguishability.

Based on these advances, in this paper, the local discrimination of GBS sets is considered.
Firstly, some practical and effective sufficient conditions for GBS sets in an arbitrary dimensional system  are given.
These sufficient conditions are easy to use and widely applicable, they cover the previous conclusions
including Fan$^{,}$s and Wang et al.$^{,}$s results \cite{fan2004prl,wang2019pra}.
And then we show that these sufficient conditions are also necessary for
local discrimination of all 4-GBS sets in $\mathbb{C}^{4}\otimes\mathbb{C}^{4}$,
and we give a table containing all the locally indistinguishable (standard) 4-GBS sets,
which can be used to judge the local distinguishability of a 4-GBS set quickly.
In $\mathbb{C}^{5}\otimes\mathbb{C}^{5}$, one of the sufficient conditions is proved to be necessary,
thus a necessary and sufficient condition for one-way local discrimination of GBS sets is provided.
The necessary and sufficient condition is concise and computable,
and it gives an affirmative answer to the case $d=5$ of the problem proposed by Wang et al. \cite{wang2019pra,yang2018qip}.

The rest of this paper is organized as follows.
In Section II, we recall some relevant notions and results.
In Section III, we present three suffient conditions for local  discrimination of GBS sets in $\mathbb{C}^{d}\otimes\mathbb{C}^{d}$.
In Section IV, for 4-GBS sets in $\mathbb{C}^{4}\otimes\mathbb{C}^{4}$,
a necessary and sufficient condition for local state discrimination is provided,
and all locally indistinguishable (standard) 4-GBS sets are found.
Further, in Section V, for 4-GBS and 5-GBS sets in $\mathbb{C}^{5}\otimes\mathbb{C}^{5}$,
we show a simple and direct necessary and sufficient condition for one-way local discrimination,
which gives an affirmative answer to  case $d=5$ of  the problem proposed by Wang et al.
Lastly, in Section VI, we draw the conclusions.

\section{Preliminaries}

\newtheorem{definition}{Definition}
\newtheorem{lemma}{Lemma}
\newtheorem{corollary}{Corollary}
\newtheorem{example}{Example}
\newtheorem{proposition}{Proposition}
\newtheorem{problem}{Problem}
\newtheorem{conjecture}{Conjecture}

Consider a $d$-dimensional Hilbert space, $\{|j\rangle\}_{j=0}^{d-1}$ is the computational basis
and $\mathbb{Z}_{d}=\{0,1,\ldots,d-1\}$.
Let $U_{m,n}=X^{m}Z^{n}, m, n\in\mathbb{Z}_{d}$ be generalized Pauli matrices (GPMs)
constituting a basis of unitary operators where $X|j\rangle=|j+1$ mod $d\rangle$, $Z|j\rangle=\omega^{j}|j\rangle$ and $\omega=e^{2\pi i/d}$,
which are generalizations of Pauli matrices.
Denote the set of all GPMs on a $d$-dimensional Hilbert space by $P(d)$.

In a quantum system $\mathbb{C}^{d}\otimes\mathbb{C}^{d}$,
the canonical maximally entangled state $|\Phi\rangle$ is $|\Phi_{00}\rangle=(1/\sqrt{d})\sum_{j=0}^{d-1}|jj\rangle$.
It is known that $(I\otimes U)|\Phi\rangle=(U^{T}\otimes I)|\Phi\rangle$, where $T$ means matrix transposition.
Any MES can be written as $|\Psi\rangle=(I\otimes U)|\Phi\rangle$ where $U$ is unitary.
If $U=X^{m}Z^{n}$, the states
\begin{eqnarray*}
|\Phi_{m,n}\rangle=(I\otimes U_{m,n})|\Phi\rangle
\end{eqnarray*}
are called generalized Bell states (GBSs).
Note that there is a one-to-one correspondence between the MESs and unitaries,
and the corresponding unitaries of GBSs are GPMs.
It is convenient to denote a GBS set $\{ (I\otimes X^{m_i}Z^{n_i})|\Phi\rangle \}$ by $\{  X^{m_i}Z^{n_i} \}$ or $\{  (m_i, n_i) \}$.
For a given GBS set $S=\{X^{m_{j}}Z^{n_{j}}|m_{j}, n_{j}\in\mathbb{Z}_{d}\}$,
the difference set $\Delta S$ of the GBS set $S$ means
\begin{widetext}
  \begin{equation*}
\Delta S=\{(m_{j}-m_{k},n_{j}-n_{k})|(m_j, n_j), (m_k, n_k)\in S, j\neq k\}.
  \end{equation*}
\end{widetext}
It plays an important role in local state discrimination \cite{tian2015pra,yang2018pra}.
This section consists of three parts: Wely commutation relation  and congruence equation,
Clifford operator, and LU-equivalence.

\subsection{Weyl commutation relation and congruence equation}

Two unitaries $A$ and $B$ are called Weyl commutative if
$AB=zBA$ where $z$ is a complex number \cite{petz2008book}.

\begin{lemma}\label{lem2.1}\cite{yuan2020jpa,wang-yuan2021jmp}
For two unitary matrices $A$ and $B$, if they are not commutative and satisfy Weyl commutation relation,
then each eigenvector $u$ of $A$ satisfies $\langle u|B|u\rangle=0$.
Especially, for an arbitrary GBS set $S$,
if there is a GPM $T$ which is not commutative to every GPM $U$ in $\Delta S$,
then each eigenvector $|v\rangle$ of $T$ satisfies $\langle v|U|v\rangle=0$
and the set $S$  is one-way LOCC distinguishable.
\end{lemma}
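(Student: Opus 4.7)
The plan is to prove the abstract Weyl-commutation statement first, then specialize it to generalized Pauli matrices to extract the vanishing condition on eigenvectors, and finally construct an explicit one-way LOCC protocol that exploits this vanishing.

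For the first statement, I would assume $AB = zBA$ with $z\neq 1$ and $A|u\rangle = \lambda|u\rangle$. Since $A$ is unitary it is normal, so its eigenspaces are mutually orthogonal; this forces $\langle u|A = \lambda\langle u|$. Evaluating $\langle u|AB|u\rangle$ in two ways then finishes the calculation: the expression equals $\lambda\langle u|B|u\rangle$ directly, while substituting the Weyl relation gives $z\langle u|BA|u\rangle = z\lambda\langle u|B|u\rangle$. Comparing yields $(1-z)\lambda\langle u|B|u\rangle = 0$, and since $z\neq 1$ and $|\lambda|=1$ we conclude $\langle u|B|u\rangle = 0$.

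For the specialization to GBSs, I would invoke the standard commutation identity for GPMs, namely $U_{a,b}U_{c,d} = \omega^{bc-ad}U_{c,d}U_{a,b}$, which shows that any two GPMs automatically satisfy a Weyl commutation relation and fail to commute precisely when the phase factor is not $1$. Applying the first part to the pair $(T,U)$ for every $U\in\Delta S$ therefore produces $\langle v|U|v\rangle = 0$ for each eigenvector $v$ of $T$.

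For the one-way LOCC part, my plan is to have Alice measure projectively in the basis $\{|\bar v_i\rangle\}$ of complex conjugates of the eigenvectors of $T$. A short calculation starting from $|\Phi\rangle = (1/\sqrt d)\sum_j|jj\rangle$ shows that after outcome $i$ the unnormalized state on Bob's side is $(1/\sqrt d)\,U_{m_j,n_j}|v_i\rangle$. The inner product of two such branches for distinct labels in $S$ equals $\langle v_i|U_{m_k,n_k}^{\dagger}U_{m_j,n_j}|v_i\rangle$, and $U_{m_k,n_k}^{\dagger}U_{m_j,n_j}$ is a root-of-unity phase times the element of $\Delta S$ indexed by $(m_j-m_k,n_j-n_k)$; this inner product therefore vanishes by the previous step. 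Bob's post-measurement branches are thus mutually orthogonal and he completes the protocol with a projective measurement onto them. The main obstacle is not the algebra of the first two parts, each of which is essentially a one-line verification, but rather setting up the LOCC protocol cleanly: one has to choose the correct basis on Alice's side (the complex-conjugated $T$-eigenbasis, dictated by the identity $(I\otimes U)|\Phi\rangle = (U^T\otimes I)|\Phi\rangle$) and verify that the root-of-unity phase relating $U_{m_k,n_k}^{\dagger}U_{m_j,n_j}$ to the corresponding element of $\Delta S$ does not spoil the vanishing. Once this bookkeeping is in place, orthogonality of Bob's branches, and hence distinguishability, is immediate.
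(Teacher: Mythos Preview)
Your argument is correct in all three parts. Note, however, that the paper does not actually supply a proof of this lemma: it is quoted from \cite{yuan2020jpa,wang-yuan2021jmp} and used as a black box, so there is no ``paper's own proof'' to compare against. What you have written is the standard proof one finds in those references: the eigenvector computation $\langle u|AB|u\rangle = \lambda\langle u|B|u\rangle = z\lambda\langle u|B|u\rangle$ forcing $(1-z)\lambda\langle u|B|u\rangle=0$, the specialization to GPMs via $U_{a,b}U_{c,d}=\omega^{bc-ad}U_{c,d}U_{a,b}$, and the one-way protocol in which Alice measures in the conjugated $T$-eigenbasis so that Bob's residual states $U_{m_j,n_j}|v_i\rangle$ are pairwise orthogonal. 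Your handling of the phase in $U_{m_k,n_k}^{\dagger}U_{m_j,n_j}\approx U_{m_j-m_k,n_j-n_k}$ and of the conjugation on Alice's side is correct; nothing is missing.
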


In order to locally distinguish a GBS set, according to Lemma \ref{lem2.1},
we need to find the GPM $T$ in Lemma \ref{lem2.1}.
For two GPMs $(m,n)$ and $(s,t)$, they are commutative if and only if $ns-mt=0 \mod d$.
So we prepare the following result about congruence equation to accomplish this task.

\begin{lemma}\label{lem2.2}
Let $m, n\in\mathbb{Z}_{d}$, $\gcd(m,n)$ be the greatest common divisor of $m$ and $n$,
and
\begin{eqnarray}\label{cogru2.1}
nx-my=0 \mod d
\end{eqnarray}
be a congruence equation.
Then there is always a solution $(\frac{m}{\gcd(m,n)},\frac{n}{\gcd(m,n)})$ to the congruence equation and the following assertions holds.
\begin{enumerate}
\item[{\rm(1)}] If $m=0,n\neq 0$, then solution set of the equation is
$\{(\frac{d}{\gcd(n,d)}t,k)|k\in\mathbb{Z}_{d},t\in\mathbb{Z}_{\gcd(n,d)}\}$.
\item[{\rm(2)}] If $m \neq 0,n= 0$, then solution set of the equation is
$\{(k,\frac{d}{\gcd(m,d)}t)|k\in\mathbb{Z}_{d},t\in\mathbb{Z}_{\gcd(m,d)}\}$.
\item[{\rm(3)}] If $m\neq0,n\neq 0$, then solution set of the equation is
$\{(\frac{m}{\gcd(m,n)}k+\frac{d}{\gcd(m,n,d)}t,\frac{n}{\gcd(m,n)}k+\frac{d}{\gcd(m,n,d)}t^{\prime})\}$
where $k=0,\cdots,\frac{d}{\gcd(m,n,d)}-1;t,t^{\prime}\in\mathbb{Z}_{\gcd(m,n,d)}$.
\end{enumerate}
\end{lemma}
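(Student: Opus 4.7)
The plan is to verify the claimed particular solution by direct substitution, dispose of the degenerate cases (1) and (2), and then handle the main case (3) by showing that the stated parametrization is both contained in and exhausts the solution set.

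For the particular solution, substituting $x = m/\gcd(m,n)$ and $y = n/\gcd(m,n)$ into $nx - my$ gives $(nm - mn)/\gcd(m,n) = 0$, which is trivially $0 \bmod d$. In case (1) with $m=0$ and $n \neq 0$, the equation collapses to the one-variable congruence $nx \equiv 0 \pmod d$ with $y$ unconstrained, and the elementary theory of linear congruences yields $x \in (d/\gcd(n,d))\mathbb{Z}_{\gcd(n,d)}$, which is exactly the stated solution set; case (2) is entirely symmetric.

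For case (3), write $g = \gcd(m,n)$, $m' = m/g$, $n' = n/g$ (so $\gcd(m',n') = 1$), and $d'' = d/\gcd(m,n,d)$. A direct substitution shows
\begin{align*}
n(m'k + d''t) - m(n'k + d''t') = (nm' - mn')k + d''(nt - mt') = d''(nt - mt'),
\end{align*}
and since $d/d'' = \gcd(m,n,d)$ divides both $m$ and $n$, this is $\equiv 0 \pmod d$, so every parametrized tuple is a solution. For the converse, the key algebraic step is the reduction $nx \equiv my \pmod d \iff n'x \equiv m'y \pmod{d''}$, obtained from $d \mid g(n'x - m'y)$ by dividing through by $\gcd(g,d) = \gcd(m,n,d)$ and using $\gcd(d'',\, g/\gcd(m,n,d)) = 1$. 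Once this reduction is available, Bezout's identity $\alpha m' + \beta n' = 1$ lets me set $k \equiv \alpha x + \beta y \pmod{d''}$ and verify, using the reduced congruence $m'y \equiv n'x \pmod{d''}$, that $x \equiv m'k$ and $y \equiv n'k \pmod{d''}$; then $t$ and $t'$ are recovered as the well-defined quotients $(x - m'k)/d''$ and $(y - n'k)/d''$ modulo $\gcd(m,n,d)$. Injectivity of the parametrization follows from the range constraints together with $\gcd(m',n') = 1$, and the count $(d/\gcd(m,n,d))\cdot \gcd(m,n,d)^{2} = d\cdot \gcd(m,n,d)$ matches the total number of solutions obtained by standard counting.

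The main obstacle is precisely the reduction $nx \equiv my \pmod d \iff n'x \equiv m'y \pmod{d''}$: dividing a congruence by a common factor of its coefficients only preserves the congruence modulo a smaller modulus, and identifying $d'' = d/\gcd(m,n,d)$ as the correct reduced modulus requires careful tracking of the interplay between $\gcd(m,n)$ and $d$.
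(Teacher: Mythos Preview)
Your argument is correct. The particular solution and cases (1)--(2) are immediate, and for case (3) your reduction $nx\equiv my\pmod d \iff n'x\equiv m'y\pmod{d''}$ via $\gcd(g/\gcd(g,d),\,d/\gcd(g,d))=1$ is exactly the right move; the Bezout recovery of $k$ and the quotient recovery of $t,t'$ are clean, and injectivity does indeed follow from $\gcd(m',n')=1$ together with the range constraints (reducing both coordinate congruences modulo $d''$ forces $k_1=k_2$, after which $t,t'$ are determined modulo $e$).

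There is nothing to compare against: the paper states Lemma~\ref{lem2.2} without proof, recording only the consequence that the total number of solutions is $d\cdot\gcd(m,n,d)$, which matches your count $d''\cdot e^2=d\cdot e$. Your write-up therefore supplies what the paper omits. If anything, you could shorten the surjectivity half by invoking the independent count $|\ker(\mathbb{Z}_d^2\to\mathbb{Z}_d,\ (x,y)\mapsto nx-my)|=d\cdot\gcd(m,n,d)$ directly (image is $\gcd(m,n,d)\mathbb{Z}_d$, kernel size follows), so that injectivity plus containment already gives equality of sets without the explicit Bezout inversion; but the constructive version you wrote is equally valid and arguably more informative.
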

For a given GPM $(m,n)$ in $\mathbb{C}^{d}\otimes\mathbb{C}^{d}$, there are $d\cdot\gcd(m,n,d)$ solutions to the congruence equation \eqref{cogru2.1}.

\subsection{Clifford operator}

A Clifford operator is defined as a unitary operator which maps the Pauli group to itself under conjugation \cite{hos2005pra,far2014jpa}.
The classical (or symplectic) representation $W$ of a single-qudit Clifford operator
is a unique two by two  symplectic matrix (up to a global phase) over $\mathbb{Z}_d$,
that is,
\begin{eqnarray*}
W = \left[
\begin{array}{llll}
a_1 &b_1\\
a_2 &b_2
\end{array}
\right]
\end{eqnarray*}
where the entries are over $\mathbb{Z}_d$.
It is known that a $2\times 2$ matrix $W$ is symplectic if and only if $\det (W)=a_1b_2-a_2b_1\equiv 1$ (mod $d$).
For a GBS $(m,n)$, the Clifford operator $W$ maps $(m,n)$ to $(a_1m+b_1n, a_2m+b_2n)$,
or $X^mZ^n\mathop{\sim}\limits^W X^{a_1m+b_1n}Z^{a_2m+b_2n}$.
Two basic Clifford operators are  quantum Fourier transform (QFT) gate $R$ and phase-shift gate $P$ are as follows:
\begin{eqnarray*}
R= \left[
\begin{array}{cccc}
0 &-1\\
1 &0
\end{array}
\right],
P = \left[
\begin{array}{llll}
1 &0\\
1 &1
\end{array}
\right].
\end{eqnarray*}
The phase-shift and QFT gates are a necessary and sufficient set of gates to
generate (up to global phase) the entire single-qudit Clifford group in any finite dimension \cite{far2014jpa}.

\subsection{LU-equivalence}
When discussing a large number of GBS sets, we need the concept of local unitary (LU) equivalence.
Let's recall the LU-equivalence of two MES sets.
\begin{definition}[\cite{wu-tian2018pra}]
Let $\{ |\phi_1\rangle , |\phi_2\rangle , \cdots , |\phi_n\rangle \}$ and $\{ |\psi_1\rangle , |\psi_2\rangle , \cdots , |\psi_n\rangle \}$ be two sets of MESs. Their corresponding unitary matrix sets are $M=\{M_1, M_2, \cdots , M_n \}$ and $N=\{N_1, N_2, \cdots , N_n \}$. If there exist two unitary operators $U_A$, $U_B$ and a permutation
$\sigma$ over $\{ 1, 2, \cdots , n \}$ such that $|\psi_i\rangle \approx (U_A \otimes U_B) |\phi_{\sigma (i)}\rangle$,
where $\approx$ denotes ``equal up to some global phase",
then these two MES sets are called LU-equivalent.
Meanwhile the corresponding unitary matrix sets are called $U$-equivalent, that is
\begin{eqnarray*}
LMR\approx N,
\end{eqnarray*}
where $L=U_B$ and $R=U_A^T$, denoted by $M\sim N$.
Especially, when $R=L^\dagger$, the two sets $M$ and $N$ are called unitary conjugate equivalent (UC equivalent),
denoted by $M\mathop{\sim}\limits^L N$.
\end{definition}
In general we can study LU-equivalence of two MES sets by studying the $U$-equivalence of their unitary matrix sets.
For convenience, we call a GBS  set containing the standard MES a standard GBS set,
and a GPM  set containing the identity matrix a standard GPM set.
Let $M=\{M_1, M_2, \cdots , M_n \}$ and $N=\{N_1, N_2, \cdots , N_n \}$ be two GPM sets.
If $M$ is $U$-equivalent to $N$, then there exist two unitaries $L$, $R$  and a permutation $\sigma$ over $\{ 1, 2, \cdots , n \}$
such that $LM_iR \approx N_{\sigma (i)}$.
Thus for a fixed $i$,
\begin{align*}
N_{\sigma (j)}\thickapprox LM_jR=(LM_i)(M_i^\dagger M_j)(LM_i)^\dagger (LM_iR)\\
=(LM_i)(M_i^\dagger M_j)(LM_i)^\dagger N_{\sigma (i)}(j=1,2,\cdots ,n).
\end{align*}
Furthermore, if some $M_i=I\in M$, then we have
\begin{align}\label{lu1}
N_{\sigma (j)} \approx (LM_jL^\dagger )N_{\sigma (i)},
L^\dagger(N_{\sigma (j)}N_{\sigma (i)}^{\dag})L \approx M_j.
\end{align}
The formula \eqref{lu1} tells us that, for any standard GPM set $M$,
if $M$ is U-equivalent to $N$ then $M$ consists of elements of the shape of $L^\dagger(N_{j}N_{i}^{\dag})L$.
In particular, if $L$ is a Clifford transformation,
then the GPM set $M=\{L^\dagger(N_{j}N_{i}^{\dag})L\}$ is U-equivalent to $N$.
For a UC transformation that maps one GPM set $A$ to another GPM set $B$,
if the set $A$ or $B$ containing nondegenerate $X^s$ and $Z^t$,
then the transformation must be a Clifford operator.
These facts lead to the following assertion.

\begin{lemma}\label{lem2.3}
Let $N=\{N_j\}$ be a GPM set, if $L$ is a Clifford transformation and $\sigma$ is a permutation,
then the standard GPM set $M=\{L^\dagger(N_{\sigma (j)}N_{\sigma (i)}^{\dag})L\}$ is U-equivalent to $N$.
Especially, if $N$ contains two nondegenerate GPMs $X^s$ and $Z^t$,
then any standard GPM set $M$ that is U-equivalent to $N$
has the form $M=\{L^\dagger(N_{\sigma (j)}N_{\sigma (i)}^{\dag})L\}$,
where $L$ is a Clifford operator and $\sigma$ is a permutation.
\end{lemma}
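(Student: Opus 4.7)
The plan is to split the lemma into its ``if'' and ``only if'' halves and treat them separately, drawing on the already-derived formula \eqref{lu1} and the UC-transformation fact that was stated in the paragraph just before the lemma.

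For the sufficiency statement, my approach is direct verification. Given a Clifford $L$ and a permutation $\sigma$, I define $M_j := L^\dagger(N_{\sigma(j)}N_{\sigma(i)}^\dagger)L$. Each $N_{\sigma(j)}N_{\sigma(i)}^\dagger$ is a product of two GPMs and is therefore again a GPM up to phase, and since $L$ is Clifford it maps GPMs to GPMs under conjugation, so $M$ is indeed a set of GPMs. Standardness follows by setting $j=i$: $M_i=L^\dagger\,I\,L=I$. To witness the U-equivalence $M\sim N$, I would plug $U_B=L$ and $U_A=L^\dagger N_{\sigma(i)}$ into the definition and compute
\begin{equation*}
U_B M_j U_A = L\bigl(L^\dagger N_{\sigma(j)}N_{\sigma(i)}^\dagger L\bigr)L^\dagger N_{\sigma(i)} = N_{\sigma(j)},
\end{equation*}
which is exactly the required relation with the same permutation $\sigma$.

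For the converse, I would begin from a generic U-equivalence $L\,M_j\,R\approx N_{\sigma(j)}$. Because $M$ is standard, specializing to the index $i$ with $M_i=I$ forces $LR\approx N_{\sigma(i)}$, hence $R\approx L^\dagger N_{\sigma(i)}$; substituting back recovers precisely formula \eqref{lu1}, $M_j\approx L^\dagger(N_{\sigma(j)}N_{\sigma(i)}^\dagger)L$. Thus $M$ is already in the desired shape, and the remaining task is to show that the unitary $L$ can be taken to be Clifford. To this end, I view conjugation by $L$ as a UC transformation carrying the GPM set $A:=\{N_{\sigma(j)}N_{\sigma(i)}^\dagger\}_j$ onto the GPM set $M$. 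The hypothesis that $N$ contains nondegenerate $X^s$ and $Z^t$ means $A$ contains $X^sN_{\sigma(i)}^\dagger$ and $Z^tN_{\sigma(i)}^\dagger$; taking suitable products inside the Pauli subgroup generated by $A$ and exploiting $\gcd(s,d)=\gcd(t,d)=1$ (so that $\langle X^s\rangle=\langle X\rangle$ and $\langle Z^t\rangle=\langle Z\rangle$) lets me recover $X$ and $Z$ themselves as elements whose image under conjugation by $L$ lies in the Pauli group. The UC-transformation fact quoted in the preceding paragraph then forces $L$ to normalize the Pauli group up to phase, i.e.\ to be a Clifford operator.

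The routine part is the sufficiency direction: it is little more than a one-line calculation together with the defining property of Cliffords. The delicate point, and where I expect the genuine content of the lemma to lie, is matching the hypothesis ``$N$ contains nondegenerate $X^s$ and $Z^t$'' to the input required by the UC-transformation fact: literally, $A=N\cdot N_{\sigma(i)}^\dagger$ need not contain the pure matrices $X^{s'}$ and $Z^{t'}$ as entries, so one cannot invoke the UC-transformation fact off the shelf. The key manoeuvre is to pass from $A$ to the multiplicative Pauli subgroup it generates, use nondegeneracy to reach the individual generators $X$ and $Z$, and argue that conjugation by $L$ sends these to GPMs — this is the step I would write out most carefully, and it is what ultimately pins $L$ down to be Clifford.
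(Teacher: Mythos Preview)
Your approach matches the paper's: the lemma is not given a separate proof there but is presented as an immediate consequence of formula \eqref{lu1} together with the UC-transformation fact stated in the preceding paragraph, and you follow exactly this line for both directions. The subtlety you flag---that the set $A=\{N_{\sigma(j)}N_{\sigma(i)}^\dagger\}$ need not literally contain pure $X^{s'}$ and $Z^{t'}$---is real and your proposed fix via the generated Pauli subgroup is sound; the paper itself does not spell this step out and simply asserts that the preceding facts yield the assertion.
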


For a given GPM set $N$, Lemma \ref{lem2.3} can be used to construct lots of GPM sets which are $U$-equivalent to $N$.
Each $U$-equivalent set of $N$ in Lemma \ref{lem2.3} only depends on Clifford operators and $N$ itself,
and then it gives an effective way to find lots of standard GPM sets that is U-equivalent to $N$.

\section{Sufficient conditions for arbitrary dimensional systems}

In this section, we will show three sufficient conditions for local discrimination of GBS sets
in arbitrary dimensional systems and some examples which can not be determined by previous known sufficient conditions.

In order to find the GPMs in Lemma \ref{lem2.1},
for a GBS set $\mathcal{S}$ and a GPM $(m,n)$ in the difference set $\Delta\mathcal{S}$,
denote the solution set of the corresponding congruence equation \eqref{cogru2.1} by $S(m,n)$,
then every element in the set
\begin{eqnarray*}
\mathcal{D}(\mathcal{S})\triangleq P(d)\setminus \bigcup_{(m,n)\in\Delta\mathcal{S}} S(m,n)
\end{eqnarray*}
satisfies the condition in Lemma \ref{lem2.1} and can be used to  locally distinguish the set $\mathcal{S}$.
Therefore, the set $\mathcal{D}(\mathcal{S})$ can be called a $\it{discriminant\ set}$ of $\mathcal{S}$.

\begin{theorem}\label{th3.1}
Let $\mathcal{S}=\{(m_{i},n_{i})|1\le i\le l\}$ be a $l$-GBS set in  $\mathbb{C}^{d}\otimes\mathbb{C}^{d}$ where $4\le l\le d$,
then the set $\mathcal{S}$ is local distinguishable when any of the following conditions is true.
\begin{enumerate}
\item[{\rm(1)}] The discriminant set $\mathcal{D}(\mathcal{S})$ is not empty.
\item[{\rm(2)}] The set $\Delta\mathcal{S}$ is commutative.
\item[{\rm(3)}] The dimension $d$ is a composite number, and for each $(m,n)\in\Delta\mathcal{S}$, $m$ or $n$ is invertible in $\mathbb{Z}_d$.
\end{enumerate}
\end{theorem}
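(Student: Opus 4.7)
The plan is to verify the three sufficient conditions one at a time, producing in each case an explicit LOCC protocol for $\mathcal{S}$.

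Part~(1) is an immediate invocation of Lemma~\ref{lem2.1}: by construction, any $(s,t)\in\mathcal{D}(\mathcal{S})$ corresponds to a GPM $U_{s,t}$ that lies outside $S(m,n)$ for every $(m,n)\in\Delta\mathcal{S}$, i.e.\ it fails to commute with every element of $\Delta\mathcal{S}$. Lemma~\ref{lem2.1} then supplies a one-way LOCC protocol in which Alice measures in the eigenbasis of $U_{s,t}$ and Bob's conditional states are pairwise orthogonal.

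Part~(2) requires more, since commutativity of $\Delta\mathcal{S}$ does not force $\mathcal{D}(\mathcal{S})\neq\emptyset$ (consider $\mathcal{S}=\{(0,0),(0,2),(2,0),(2,2)\}$ in $d=4$). I would first use LU-equivalence to assume $(0,0)\in\mathcal{S}$, after which $\mathcal{S}\setminus\{(0,0)\}\subseteq\Delta\mathcal{S}$ forces the $U_{m_i,n_i}$ to pairwise commute, so they share a common orthonormal eigenbasis $\{|v_k\rangle\}$ with $U_{m_i,n_i}|v_k\rangle=\lambda_k^{(i)}|v_k\rangle$. Writing $|\Phi\rangle=\frac{1}{\sqrt d}\sum_k|v_k^{*}\rangle|v_k\rangle$ puts each GBS in the form $|\Phi_{m_i,n_i}\rangle=\frac{1}{\sqrt d}\sum_k\lambda_k^{(i)}|v_k^{*}\rangle|v_k\rangle$. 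Alice then measures in the Fourier-conjugate basis $\{|w_\ell\rangle=\frac{1}{\sqrt d}\sum_k\omega^{-\ell k}|v_k^{*}\rangle\}$; conditional on outcome $\ell$, Bob's normalized state is proportional to $\sum_k\lambda_k^{(i)}\omega^{\ell k}|v_k\rangle$, and pairwise orthogonality across $i$ reduces to $\sum_k\overline{\lambda_k^{(i)}}\lambda_k^{(j)}=0$, which is exactly the orthogonality of the original GBSs. Bob completes the discrimination by a projective measurement.

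For part~(3), the natural first approach is again to exhibit an element of $\mathcal{D}(\mathcal{S})$ and reduce to~(1). When $d$ admits a coprime factorization $d=d_1d_2$ with $d_1,d_2>1$, the candidate $(s,t)=(d_1,d_2)$ works: for $(m,n)\in\Delta\mathcal{S}$ with $m$ invertible, the combination $nd_1-md_2\equiv -md_2\pmod{d_1}$ is a unit modulo $d_1$ (since $\gcd(m,d_1)=\gcd(d_2,d_1)=1$), so $ns-mt\not\equiv 0\pmod d$; the $n$-invertible case is symmetric modulo $d_2$. The prime-power regime $d=p^e$ is the principal obstacle: the discriminant set can genuinely be empty under the hypothesis of~(3) (for instance $\mathcal{S}=\{(0,0),(1,0),(0,1),(3,3)\}$ in $d=4$ satisfies the invertibility condition yet has $\mathcal{D}(\mathcal{S})=\emptyset$), so the witness-based strategy through~(1) fails, and one must instead build an adaptive LOCC protocol that engages directly with the $p$-adic structure of $\mathbb{Z}_d$, perhaps via a Clifford-equivalent representative or a multi-round measurement that eliminates candidates in stages. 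Since parts~(1) and~(2) are essentially settled by Lemma~\ref{lem2.1} and the Fourier-conjugate basis trick respectively, the hardest work goes into this last case.
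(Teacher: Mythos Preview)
Your treatments of~(1) and~(2) are correct. For~(2) your Fourier-conjugate basis is an explicit version of what the paper does more tersely: the paper simultaneously diagonalizes $\Delta\mathcal{S}$ by some unitary $U$, takes $|\alpha\rangle=U|\beta\rangle$ with $|\beta\rangle=\frac{1}{\sqrt d}(1,\ldots,1)$, and notes $\langle\alpha|A|\alpha\rangle=\frac{1}{d}\operatorname{tr}A=0$; the full measurement basis (your $\{|w_\ell\rangle\}$) is then implicit.

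Part~(3), however, has a genuine gap. You are right that when $d=p^{e}$ is a prime power the discriminant set can be empty under hypothesis~(3); your example is exactly the paper's $\Gamma^{1}_{33}$. But your inference that one must therefore resort to an adaptive multi-round protocol is mistaken, and the paper disposes of all composite $d$ uniformly with a one-shot one-way protocol. The key idea you are missing is to abandon the search for a \emph{single} GPM witness and instead use a \emph{pair} of commuting degenerate GPMs. For any nontrivial factorization $d=st$ (coprime or not), the operators $X^{s}$ and $Z^{t}$ commute (since $st\equiv 0\pmod d$) and hence share a common orthonormal eigenbasis. If $m$ is invertible in $\mathbb{Z}_d$ then $tm\not\equiv 0\pmod d$, so $Z^{t}$ fails to commute with $X^{m}Z^{n}$, and Lemma~\ref{lem2.1} gives $\langle\gamma|X^{m}Z^{n}|\gamma\rangle=0$ for every eigenvector $|\gamma\rangle$ of $Z^{t}$; symmetrically, if $n$ is invertible then every eigenvector of $X^{s}$ kills $X^{m}Z^{n}$. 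A common eigenvector of $X^{s}$ and $Z^{t}$ therefore satisfies $\langle\gamma|X^{m}Z^{n}|\gamma\rangle=0$ for every $(m,n)\in\Delta\mathcal{S}$, and Alice measuring in the common eigenbasis yields the one-way protocol directly. In your $d=4$ example this is simply the common eigenbasis of $X^{2}$ and $Z^{2}$; no $p$-adic analysis or adaptivity is required.
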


Since the commutativity of the set $\mathcal{S}$ ensures the commutativity of its difference set $\Delta\mathcal{S}$,
according to Theorem \ref{th3.1} (2), $\mathcal{S}$ is local disdinguishable when $\mathcal{S}$ is commutative.
The condition (3) in Theorem \ref{th3.1} only applies when $d$ is composite.
It is easy to see that, when $d$ is a prime number, the condition (2) is a special case of  (1) in Theorem \ref{th3.1}.

\begin{proof}
From the previous discussion, the conclusion (1) is obvious.
Let's consider (2), since the set $\Delta\mathcal{S}$ is commutative,
there exists a unitary matrix $U$ which maps each GPM $A$ in $\Delta\mathcal{S}$ to a diagonal matrix $U^{\dagger}AU$.
Let $|\beta\rangle\triangleq \frac{1}{\sqrt{d}}|(1,\cdots,1)\rangle$ be a $d$-dimensional unit vector,
then the vector $|\alpha\rangle=U|\beta\rangle$ satisfies
$\langle \alpha|A|\alpha\rangle=\langle \beta|U^{\dagger}AU|\beta\rangle=\hbox{tr}{A}=0$,
thus the set $S$ is one-way LOCC distinguishable.
Now consider the last condition, let $d$ be a composite number and $d=st$ be a decomposition.
For $(m,n)\in\Delta\mathcal{S}$ and $m$ is invertible in $\mathbb{Z}_d$,
the GPM $Z^{t}$ is not commutative to $(m,n)$,
then each eigenstate $|\alpha\rangle$ of $Z^{t}$ satisfies $\langle \alpha|X^{m}Z^{n}|\alpha\rangle=0$.
Similarly, if $n$ is invertible in $\mathbb{Z}_d$,
then each eigenstate $|\beta\rangle$ of $X^{s}$ satisfies $\langle \beta|X^{m}Z^{n}|\beta\rangle=0$.
Since $X^{s}$ and $Z^{t}$ are commutative, they have common eigenstates.
Obviously, each common eigenstate $|\gamma\rangle$ satisfies $\langle \gamma|X^{m}Z^{n}|\gamma\rangle=0$.
Therefore the set $S$ is one-way LOCC distinguishable.
\end{proof}

We give some examples to show that Theorem \ref{th3.1} has a wide range of applications.

For a given GBS set $\mathcal{L}=\{(m_{i},n_{i})|1\le i\le l\le d\}$ in  $\mathbb{C}^{d}\otimes\mathbb{C}^{d}$
and the congruence equation
\begin{eqnarray}\label{congru3.1}
m_{i} y+ n_{i}=m_{j} y+ n_{j} \mod d
\end{eqnarray}
where $i<j$ and $(m_{i},n_{i}),(m_{j},n_{j})\in\mathcal{L}$,
let $S_{ij}$ be the solution set of the equation \eqref{congru3.1},
$S_{\mathcal{L}_{y}}$ be the the union of all  $S_{ij}$,
and $S_{A}$ be $\{0,\cdots,d-1,\infty\}$.
The two sets $S_{A}$ and $S_{\mathcal{L}_{y}}$ are called admissible solution set and nonadmissible solution set
for $\mathcal{L}$ respectively \cite{wang2019pra,yang2021qip}.
If the set $S_{A}\backslash S_{\mathcal{L}_{y}}$ is not empty,
then it is easy to check that, for each element $y$ in $S_{A}\backslash S_{\mathcal{L}_{y}}$,
the GPM $(d-1,y)$ is also an element of the set $\mathcal{D}(\mathcal{L})$,
and then the set $\mathcal{L}$ is local distinguishable.
This shows that Theorem \ref{th3.1} (1) leads to Wang et al.$^{,}$s result
which is an extension of Fan$^{,}$s result \cite[Theorem]{fan2004prl}.
But  Wang et al.$^{,}$s result \cite[Theorem 1]{wang2019pra} is not applicable to the following three examples.

\begin{example}\label{ex3.1}
Consider the 5-GBS set $\mathcal{L}_{1}=\{(0,0),(0,1),(1,0),(1,4),(5,5)\}$
in  $\mathbb{C}^{6}\otimes\mathbb{C}^{6}$.
By using the congruence equation \eqref{congru3.1},
it is easy to check that the set $S_{A}\backslash S_{\mathcal{L}_{1_{y}}}$ is empty and
Wang et al.’s result is not applicable.
Now we apply Theorem \ref{th3.1} (1) to the set $\mathcal{L}_{1}$.
By using congruence equation \eqref{cogru2.1} and simple calculation, the difference set
$\Delta\mathcal{L}_{1}=\{(0,1),(1,0),(1,4),(5,5),(1,5),(1,3),(5,4),(0,4),\\
(4,5), (4,1),(0,5),(5,0),(5,2),(1,1),(5,1),(5,3),\\
(1,2),(0,2),(2,1),(2,5)\}$
and the GPM $(2,3)$ belongs to the discriminant set $\mathcal{D}(\mathcal{L}_{1})$,
and then the set $\mathcal{L}_{1}$ is local distinguishable.
\end{example}

\begin{example}\label{ex3.2}
Consider the GBS set $\mathcal{L}_{2}=\{(1,2),(1,0),(3,2),(3,0)\}$ in  $\mathbb{C}^{4}\otimes\mathbb{C}^{4}$
and the set $\mathcal{L}_{3}=\{(2,3),(2,0),(5,3),(5,0)\}$ in  $\mathbb{C}^{6}\otimes\mathbb{C}^{6}$.
The difference sets are $\Delta\mathcal{L}_{2}=\{(0,2),(2,0),(2,2)\}$ and
$\Delta\mathcal{L}_{3}=\{(0,3),(3,0),(3,3)\}$ respectively.
By Lemma \ref{lem2.2} and the congruence equation \eqref{congru3.1},
it is easy to check that all four sets $S_{A}\backslash S_{\mathcal{L}_{2_{y}}}$, $S_{A}\backslash S_{\mathcal{L}_{3_{y}}}$,
$\mathcal{D}(\mathcal{L}_{2})$ and $\mathcal{D}(\mathcal{L}_{3})$
are empty, and then Wang et al.’s result and Theorem \ref{th3.1} (1) are not applicable.
Fortunately, both sets $\Delta\mathcal{L}_{2}$ and $\Delta\mathcal{L}_{3}$ are commutative.
By Theorem \ref{th3.1} (2), $\mathcal{L}_{2}$ and $\mathcal{L}_{3}$ are local distinguishable.
\end{example}

\begin{example}\label{ex3.3}
Consider the GBS set $\mathcal{L}_{4}=\{(1,2),(1,3),(2,2),(0,1)\}$ in  $\mathbb{C}^{4}\otimes\mathbb{C}^{4}$.
The difference set are $\Delta\mathcal{L}_{4}=\{(0,1),(1,0),(3,3),(1,3),(3,2),(2,3),(0,3),(3,0),\\
(1,1),(3,1),(1,2),(2,1)\}$.
It is easy to check that both sets $S_{A}\backslash S_{\mathcal{L}_{4_{y}}}$ and $\mathcal{D}(\mathcal{L}_{1})$
are empty, and then Wang et al.’s result and Theorem \ref{th3.1} (1)-(2) are not applicable.
Obviously, Theorem \ref{th3.1} (3) is applicable and $\mathcal{L}_{4}$ is local distinguishable.
\end{example}

It is interesting that
all the three examples can not be discriminated by Wang et al.$^{,}$s result.
Example \ref{ex3.1} can only be determined by Theorem \ref{th3.1} (1),
Example \ref{ex3.2} can only be determined by Theorem \ref{th3.1} (2),
and only Theorem \ref{th3.1} (3) applies to Example \ref{ex3.3}.

In this section, on the basis of Weyl commutation relation and congruence equation,
we present three sufficient conditions for local discrimination of GBS sets in arbitrary dimensional systems.
The three conditions show that the problem of local discrimination of the GBS set $\mathcal{S}$
can be solved by the properties of the difference set $\Delta\mathcal{S}$.
For example, the condition (2) says that the commutativity of the elements $(m,n)$ in the difference set of $\mathcal{S}$
leads to that the original set $\mathcal{S}$ is locally distinguishable.

\section{Necessary and sufficient conditions for 4-GBS sets in $\mathbb{C}^{4}\otimes\mathbb{C}^{4}$}

It is obvious that every GBS set is LU-equivalent to a standard GBS  set.
In this section, we will show that if a 4-GBS set in $\mathbb{C}^{4}\otimes\mathbb{C}^{4}$
is local distinguishable then one of the three conditions in Theorem \ref{th3.1} is true,
that is, the three sufficient conditions in Theorem \ref{th3.1} are also necessary.
Moreover, we will provide a list of all local indistinguishable  standard 4-GBS sets.
The list and the necessary and sufficient conditions allow us
to quickly judge the local distinguishability  of any 4-GBS set in $\mathbb{C}^{4}\otimes\mathbb{C}^{4}$.

\begin{theorem}\label{th4.1}
Let $\mathcal{S}=\{(m_{i},n_{i})|1\le i\le 4\}$ be a GBS set in  $\mathbb{C}^{4}\otimes\mathbb{C}^{4}$,
then $\mathcal{S}$ is local distinguishable if and only if one of the following conditions is true.
\begin{enumerate}
\item[{\rm(1)}] The discriminant set $\mathcal{D}(\mathcal{S})$ is not empty.
\item[{\rm(2)}] The set $\Delta\mathcal{S}$ is commutative.
\item[{\rm(3)}] For each $(m,n)\in\Delta\mathcal{S}$, $m$ or $n$ is invertible in $\mathbb{Z}_4$.
\end{enumerate}
\end{theorem}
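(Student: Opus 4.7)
The sufficiency direction is immediate from Theorem \ref{th3.1}, so the plan focuses on necessity: assuming none of (1), (2), (3) holds, one must show $\mathcal{S}$ is locally indistinguishable. I would proceed by the contrapositive, reducing the question to a finite case analysis via LU-equivalence.

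First, since LU-equivalence preserves local distinguishability and every GBS set is LU-equivalent to a standard one (containing the identity), I may assume $\mathcal{S}$ is a standard 4-GBS set. By Lemma \ref{lem2.3}, applying the single-qudit Clifford group generated by the gates $R$ and $P$ produces further $U$-equivalent standard sets, and local distinguishability is constant on such orbits. Since the Clifford group over $\mathbb{Z}_4$ is finite, its action partitions the standard 4-GBS sets into a manageable number of orbits, from each of which I would pick one representative. Because $d=4$, there are only $\binom{15}{3}=455$ candidate standard 4-GBS sets before factoring by Clifford symmetry, so the enumeration is feasible.

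Next, for each orbit representative I would test the three conditions using Lemma \ref{lem2.2}: condition (3) reduces to checking that every $(m,n)\in\Delta\mathcal{S}$ has at least one odd coordinate, since the invertible elements of $\mathbb{Z}_4$ are precisely the odd ones; condition (2) reduces to pairwise commutativity $n_i m_j - m_i n_j \equiv 0 \pmod 4$ within $\Delta\mathcal{S}$; and condition (1) is the nonemptiness of the discriminant set $\mathcal{D}(\mathcal{S})$. Orbits satisfying any of these are locally distinguishable by Theorem \ref{th3.1} and may be discarded, leaving only a short list of candidates for local indistinguishability.

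The main obstacle will be ruling out every (possibly adaptive, two-way) LOCC protocol for each surviving orbit, since the three conditions only certify one-way distinguishability. For this I expect to invoke the structural obstructions developed by Bandyopadhyay et al.\ for 4-GBS sets in $\mathbb{C}^4\otimes\mathbb{C}^4$, or to exhibit, for each surviving orbit, an explicit LU-equivalence to one of the known locally indistinguishable 4-GBS sets from the literature (so that indistinguishability transfers). Compiling the surviving orbits then produces the promised table of all locally indistinguishable standard 4-GBS sets; a byproduct is the decision procedure for any given 4-GBS set, namely standardizing it via an appropriate $M_i^\dagger$ and locating its Clifford orbit in the table.
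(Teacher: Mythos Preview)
Your plan is essentially the same as the paper's: reduce to a finite LU/Clifford classification and combine Theorem~\ref{th3.1} with known indistinguishability results. The paper streamlines this in two ways you should adopt. First, rather than re-enumerating Clifford orbits from scratch, it cites the complete LU classification of 4-GBS sets in $\mathbb{C}^4\otimes\mathbb{C}^4$ from Tian et~al.\ \cite{tian2016pra}, which already identifies ten classes and proves that exactly three of them ($\Gamma^1_{12},\Gamma^2_{20},\Gamma^2_{32}$) are two-way LOCC indistinguishable. This eliminates your ``main obstacle'' entirely: you do not need Bandyopadhyay et~al.\ (who only show one-way indistinguishability), and you do not need to match surviving orbits to literature examples by hand. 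Second, the paper argues in the direct direction rather than the contrapositive: it simply checks that each of the seven \emph{distinguishable} representatives satisfies one of (1)--(3), which is a short computation.

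One point you gloss over that the paper makes explicit: the three conditions (1)--(3) must themselves be LU-invariant for an orbit-representative check to be valid. In your contrapositive framing, if a representative fails all three conditions but some other member of its orbit satisfies one, you would wrongly place a distinguishable orbit on your ``surviving'' list and then be unable to prove it indistinguishable. The paper handles this via formula~\eqref{lu1} and Lemma~\ref{lem2.3}, noting that $\Delta\mathcal{S}$ transforms by a Clifford conjugation under LU-equivalence, so commutativity of $\Delta\mathcal{S}$, invertibility of coordinates, and emptiness of $\mathcal{D}(\mathcal{S})$ are all preserved. You should state and justify this invariance explicitly.
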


\begin{proof}
Let $\mathcal{S}=\{(m_{i},n_{i})|1\le i\le 4\}$ be a local distinguishable 4-GBS set,
we need to show that the set  $\mathcal{S}$ meets one of the three conditions.
It is known that all 4-GBS sets can be classified into ten locally inequivalent
classes and the representative elements of these equivalence classes are
\begin{align*}
\mathcal{K}=\{I,X^{2},Z^{2},X^{2}Z^{2}\},&\mathcal{L}=\{I,X,X^{2},X^{3}\},\\
\Gamma^{1}_{20}=\{I,X,Z,X^{2}\},&\Gamma^{1}_{31}=\{I,X,Z,X^{3}Z\},\\
\Gamma^{1}_{33}=\{I,X,Z,X^{3}Z^{3}\},&\Gamma^{2}_{12}=\{I,X,Z^{2},XZ^{2}\},\\
\Gamma^{2}_{30}=\{I,X,Z^{2},X^{3}\},&\Gamma^{1}_{12}=\{I,X,Z,XZ^{2}\},\\
\Gamma^{2}_{20}=\{I,X,Z^{2},X^{2}\},&\Gamma^{2}_{32}=\{I,X,Z^{2},X^{3}Z^{2}\}.
\end{align*}
The first 7 sets are one-way LOCC distinguishable, while
the last 3 sets are one-way LOCC indistinguishable (also two-way LOCC indistinguishable).
Hence the local distinguishable set $\mathcal{S}$ is  LU-equivalent to
one of the first 7 sets \cite{tian2016pra}:
$\mathcal{K},\mathcal{L},\Gamma^{1}_{20},\Gamma^{1}_{31},\Gamma^{1}_{33},\Gamma^{2}_{12},\Gamma^{2}_{30}$.
By the formula \eqref{lu1} and Lemma \ref{lem2.3},
the three conditions in Theorem \ref{th4.1} are invariant under LU-equivalence.
Now we only need to consider  the above seven sets.
The two sets $\mathcal{K}$ and $\mathcal{L}$ are commutative,
then their difference sets are also commutative and they meet the condition (2).
Since the difference set $\Delta\Gamma^{1}_{20}=\{(1,0),(0,1),(2,0),(3,1),(2,3),(3,0),(0,3),(1,3),\\(2,1)\}$,
it is easy to check that $(1,1)\in\mathcal{D}(\Gamma^{1}_{20})$,
and then  $\Gamma^{1}_{20}$ satisfies the condition (1).
Similarly, it is easy to check that $\{(1,1),(1,2)\}\subseteq\mathcal{D}(\Gamma^{1}_{31})$,
$\{(1,1),(1,3)\}\subseteq\mathcal{D}(\Gamma^{2}_{12})$,
and $\{(1,1),(1,3)\}\subseteq\mathcal{D}(\Gamma^{2}_{30})$,
then the three sets also meet the condition (1).
About the set $\Gamma^{1}_{33}$,
since the difference set
$\Delta\Gamma^{1}_{33}=\{(1,0),(0,1),(3,3),(1,3),(2,3),(3,2),(3,0),(0,3),\\(1,1),
(3,1),(2,1),(1,2)\}$,
it is clear that it meets the condition (3).
\end{proof}

Next we will use the LU-equivalence to give a list of all local indistinguishable standard 4-GBS sets.
We first give a protocol according to Lemma \ref{lem2.3} to determine the U-equivalence of two GPM sets $M$ and $N$.

Protocol. Let $M=\{ M_1, \cdots , M_l \}=\{ (s_1, t_1), \cdots , (s_l, t_l) \}$ and
$N=\{ (m_1, n_1), \cdots , (m_l, n_l) \}$ be two sets of GPMs
and $I, X^s, Z^t$ ($s$, $t$ are invertible in $\mathbb{Z} _d$) are contained in $N$.

1. Do $i=1$.

2. Let $L_{i}=\{ (s_1-s_i, t_1-t_i), \cdots , (s_l-s_i, t_l-t_i) \}$.

3. Use a program to apply a Clifford operator
$F= \left[
\begin{array}{cccc}
a_1 &b_1\\
a_2 &b_2
\end{array}
\right]$
 to the set $L_{i}$, then we obtain the set
\begin{widetext}
\begin{eqnarray*}
\{(a_1(s_1-s_i)+b_1(t_1-t_i), a_2(s_1-s_i)+b_2(t_1-t_i)), \cdots ,\\
 (a_1(s_l-s_i)+b_1(t_l-t_i), a_2(s_l-s_i)+b_2(t_l-t_i)) \} (\textrm{mod} \  d)
\end{eqnarray*}
\end{widetext}
for all $0\leq a_1,a_2,b_1,b_2\leq d-1$ and $a_1b_2-a_2b_1 \equiv 1$ (mod $d$).

4. For any $1\le i\le l$ and Clifford operator $F$, print all the sets obtained in step 3,
then we obtain all standard GPM sets that are U-equivalent to $N$.

According to this protocol and the ten 4-GBS sets in \cite[Theorem 2]{tian2016pra},
we get  ten  equivalence classes
$\widetilde{\mathcal{K}},\widetilde{\mathcal{L}},\widetilde{\Gamma^{1}_{20}},\widetilde{\Gamma^{1}_{31}},
\widetilde{\Gamma^{1}_{33}},\widetilde{\Gamma^{2}_{12}},\widetilde{\Gamma^{2}_{30}},
\widetilde{\Gamma^{1}_{12}},\widetilde{\Gamma^{2}_{20}},\widetilde{\Gamma^{2}_{32}}$.
See Table I for the number of  standard GBS sets in each equivalence class.

\begin{table*}[htbp]
\centering
\caption{Numbers of standard GBS sets in equivalence classes}
\label{tab4.1}
\begin{tabular}{|c|c|c|c|c|c|c|c|c|c|c|c|}
\hline
$\widetilde{\mathcal{K}}$&$\widetilde{\mathcal{L}}$&$\widetilde{\Gamma^{1}_{20}}$&$\widetilde{\Gamma^{1}_{31}}$&$\widetilde{\Gamma^{1}_{33}}$
&$\widetilde{\Gamma^{2}_{12}}$&$\widetilde{\Gamma^{2}_{30}}$&$\widetilde{\Gamma^{1}_{12}}$&$\widetilde{\Gamma^{2}_{20}}$&$\widetilde{\Gamma^{2}_{32}}$& Total\\
\hline
1&6&192&48 &16&12 &24&96&48&12&455   \\
\hline
\end{tabular}
\end{table*}

The ten 4-GBS sets are not LU equivalent to each other,
and the ten equivalence classes contain a total of 455 standard GBS sets.
Since there are $C_{15}^{3}(=455$) standard 4-GBS sets in $\mathbb{C}^{4}\otimes\mathbb{C}^{4}$,
the ten equivalence classes make up a complete classification,
and all 4-GBS sets in the  three equivalence classes
$\widetilde{\Gamma^{1}_{12}},\widetilde{\Gamma^{2}_{20}},\widetilde{\Gamma^{2}_{32}}$
constitute all local indistinguishable standard 4-GBS sets in $\mathbb{C}^{4}\otimes\mathbb{C}^{4}$.
That is, there are 156 local indistinguishable standard 4-GBS sets, see Table II for details.
Using Table II, we can immediately determine the distinguishability of an arbitrary 4-GBS set.

\begin{table*}[!ht]
\caption{All local indistinguishable standard 4-GBS sets in $\mathbb{C}^{4}\otimes\mathbb{C}^{4}$ (156 items)}
\label{tab4.1}
\centering
\scalebox{0.85}{
\begin{tabular}{|c|l|}
\hline
$d$ &\\
\hline
&\{(0,0),(0,1),(1,0),(1,2)\},\{(0,0),(0,1),(1,0),(2,1)\},\{(0,0),(0,1),(1,0),(2,2)\},\{(0,0),(0,1),(1,1),(1,3)\},\\
$\widetilde{\Gamma^{1}_{12}}$
&\{(0,0),(0,1),(1,1),(2,0)\},\{(0,0),(0,1),(1,1),(2,3)\},\{(0,0),(0,1),(1,2),(2,1)\},\{(0,0),(0,1),(1,2),(2,2)\},\\
(96 items)
&\{(0,0),(0,1),(1,3),(2,0)\},\{(0,0),(0,1),(1,3),(2,3)\},\{(0,0),(0,1),(2,0),(3,1)\},\{(0,0),(0,1),(2,0),(3,3)\},\\
&\{(0,0),(0,1),(2,1),(3,0)\},\{(0,0),(0,1),(2,1),(3,2)\},\{(0,0),(0,1),(2,2),(3,0)\},\{(0,0),(0,1),(2,2),(3,2)\},\\
&\{(0,0),(0,1),(2,3),(3,1)\},\{(0,0),(0,1),(2,3),(3,3)\},\{(0,0),(0,1),(3,0),(3,2)\},\{(0,0),(0,1),(3,1),(3,3)\},\\
&\{(0,0),(0,2),(1,0),(1,1)\},\{(0,0),(0,2),(1,0),(1,3)\},\{(0,0),(0,2),(1,0),(3,1)\},\{(0,0),(0,2),(1,0),(3,3)\},\\
&\{(0,0),(0,2),(1,1),(1,2)\},\{(0,0),(0,2),(1,1),(3,0)\},\{(0,0),(0,2),(1,1),(3,2)\},\{(0,0),(0,2),(1,2),(1,3)\},\\
&\{(0,0),(0,2),(1,2),(3,1)\},\{(0,0),(0,2),(1,2),(3,3)\},\{(0,0),(0,2),(1,3),(3,0)\},\{(0,0),(0,2),(1,3),(3,2)\},\\
&\{(0,0),(0,2),(3,0),(3,1)\},\{(0,0),(0,2),(3,0),(3,3)\},\{(0,0),(0,2),(3,1),(3,2)\},\{(0,0),(0,2),(3,2),(3,3)\},\\
&\{(0,0),(0,3),(1,0),(1,2)\},\{(0,0),(0,3),(1,0),(2,2)\},\{(0,0),(0,3),(1,0),(2,3)\},\{(0,0),(0,3),(1,1),(1,3)\},\\
&\{(0,0),(0,3),(1,1),(2	0)\},\{(0,0),(0,3),(1,1),(2,1)\},\{(0,0),(0,3),(1,2),(2,2)\},\{(0,0),(0,3),(1,2),(2,3)\},\\
&\{(0,0),(0,3),(1,3),(2,0)\},\{(0,0),(0,3),(1,3),(2,1)\},\{(0,0),(0,3),(2,0),(3,1)\},\{(0,0),(0,3),(2,0),(3,3)\},\\
&\{(0,0),(0,3),(2,1),(3,1)\},\{(0,0),(0,3),(2,1),(3,3)\},\{(0,0),(0,3),(2,2),(3,0)\},\{(0,0),(0,3),(2,2),(3,2)\},\\
&\{(0,0),(0,3),(2,3),(3,0)\},\{(0,0),(0,3),(2,3),(3,2)\},\{(0,0),(0,3),(3,0),(3,2)\},\{(0,0),(0,3),(3,1),(3,3)\},\\
&\{(0,0),(1,0),(1,1),(3,1)\},\{(0,0),(1,0),(1,1),(3,2)\},\{(0,0),(1,0),(1,2),(2,1)\},\{(0,0),(1,0),(1,2),(2,3)\},\\
&\{(0,0),(1,0),(1,3),(3,2)\},\{(0,0),(1,0),(1,3),(3,3)\},\{(0,0),(1,0),(2,1),(2,2)\},\{(0,0),(1,0),(2,2),(2,3)\},\\
&\{(0,0),(1,0),(3,1),(3,2)\},\{(0,0),(1,0),(3,2),(3,3)\},\{(0,0),(1,1),(1,2),(3,0)\},\{(0,0),(1,1),(1,2),(3,1)\},\\
&\{(0,0),(1,1),(1,3),(2,1)\},\{(0,0),(1,1),(1,3),(2,3)\},\{(0,0),(1,1),(2,0),(2,1)\},\{(0,0),(1,1),(2,0),(2,3)\},\\
&\{(0,0),(1,1),(3,0),(3,1)\},\{(0,0),(1,1),(3,1),(3,2)\},\{(0,0),(1,2),(1,3),(3,0)\},\{(0,0),(1,2),(1,3),(3,3)\},\\
&\{(0,0),(1,2),(2,1),(2,2)\},\{(0,0),(1,2),(2,2),(2,3)\},\{(0,0),(1,2),(3,0),(3,1)\},\{(0,0),(1,2),(3,0),(3,3)\},\\
&\{(0,0),(1,3),(2,0),(2,1)\},\{(0,0),(1,3),(2,0),(2,3)\},\{(0,0),(1,3),(3,0),(3,3)\},\{(0,0),(1,3),(3,2),(3,3)\},\\
&\{(0,0),(2,0),(2,1),(3,1)\},\{(0,0),(2,0),(2,1),(3,3)\},\{(0,0),(2,0),(2,3),(3,1)\},\{(0,0),(2,0),(2,3),(3,3)\},\\
&\{(0,0),(2,1),(2,2),(3,0)\},\{(0,0),(2,1),(2,2),(3,2)\},\{(0,0),(2,1),(3,0),(3,2)\},\{(0,0),(2,1),(3,1),(3,3)\},\\
&\{(0,0),(2,2),(2,3),(3,0)\},\{(0,0),(2,2),(2,3),(3,2)\},\{(0,0),(2,3),(3,0),(3,2)\},\{(0,0),(2,3),(3,1),(3,3)\}.\\
\hline
&\{(0,0),(0,1),(0,2),(2,0)\},\{(0,0),(0,1),(0,2),(2,2)\},\{(0,0),(0,1),(0,3),(2,1)\},\{(0,0),(0,1),(0,3),(2,3)\},\\
$\widetilde{\Gamma^{2}_{20}}$
&\{(0,0),(0,1),(2,0),(2,2)\},\{(0,0),(0,1),(2,1),(2,3)\},\{(0,0),(0,2),(0,3),(2,0)\},\{(0,0),(0,2),(0,3),(2,2)\},\\
(48 items)
&\{(0,0),(0,2),(1,0),(2,0)\},\{(0,0),(0,2),(1,0),(2,2)\},\{(0,0),(0,2),(1,1),(2,0)\},\{(0,0),(0,2),(1,1),(2,2)\},\\
&\{(0,0),(0,2),(1,2),(2,0)\},\{(0,0),(0,2),(1,2),(2,2)\},\{(0,0),(0,2),(1,3),(2,0)\},\{(0,0),(0,2),(1,3),(2,2)\},\\
&\{(0,0),(0,2),(2,0),(2,1)\},\{(0,0),(0,2),(2,0),(2,3)\},\{(0,0),(0,2),(2,0),(3,0)\},\{(0,0),(0,2),(2,0),(3,1)\},\\
&\{(0,0),(0,2),(2,0),(3	2)\},\{(0,0),(0,2),(2,0),(3,3)\},\{(0,0),(0,2),(2,1),(2,2)\},\{(0,0),(0,2),(2,2),(2,3)\},\\
&\{(0,0),(0,2),(2,2),(3,0)\},\{(0,0),(0,2),(2,2),(3,1)\},\{(0,0),(0,2),(2,2),(3,2)\},\{(0,0),(0,2),(2,2),(3,3)\},\\
&\{(0,0),(0,3),(2,0),(2,2)\},\{(0,0),(0,3),(2,1),(2,3)\},\{(0,0),(1,0),(1,2),(3,0)\},\{(0,0),(1,0),(1,2),(3,2)\},\\
&\{(0,0),(1,0),(2,0),(2,2)\},\{(0,0),(1,0),(3,0),(3,2)\},\{(0,0),(1,1),(1,3),(3,1)\},\{(0,0),(1,1),(1,3),(3,3)\},\\
&\{(0,0),(1,1),(2,0),(2,2)\},\{(0,0),(1,1),(3,1),(3,3)\},\{(0,0),(1,2),(2,0),(2,2)\},\{(0,0),(1,2),(3,0),(3,2)\},\\
&\{(0,0),(1,3),(2,0),(2,2)\},\{(0,0),(1,3),(3,1),(3,3)\},\{(0,0),(2,0),(2,1),(2,2)\},\{(0,0),(2,0),(2,2),(2,3)\},\\
&\{(0,0),(2,0),(2,2),(3,0)\},\{(0,0),(2,0),(2,2),(3,1)\},\{(0,0),(2,0),(2,2),(3,2)\},\{(0,0),(2,0),(2,2),(3,3)\}.\\
\hline
&\{(0,0),(0,1),(2,0),(2,3)\},\{(0,0),(0,1),(2,1),(2,2)\},\{(0,0),(0,2),(1,0),(3,2)\},\{(0,0),(0,2),(1,1),(3,1)\},\\
$\widetilde{\Gamma^{2}_{32}}$
&\{(0,0),(0,2),(1,2),(3,0)\},\{(0,0),(0,2),(1,3),(3,3)\},\{(0,0),(0,3),(2,0),(2,1)\},\{(0,0),(0,3),(2,2),(2,3)\},\\
(12 items)&\{(0,0),(1,0),(1,2),(2,2)\},\{(0,0),(1,1),(1,3),(2,0)\},\{(0,0),(2,0),(3,1),(3,3)\},\{(0,0),(2,2),(3,0),(3,2)\}.\\
\hline
\end{tabular}}
\end{table*}

In this section, we showed that, in $\mathbb{C}^{4}\otimes\mathbb{C}^{4}$,
the three sufficient conditions are also necessary conditions according to the following three facts:
(a) The three conditions are invariant under LU-equivalence (see formular \eqref{lu1}).
(b) All $C_{16}^{4}$(=1820) 4-GBS sets are classified into ten equivalence classes \cite{tian2016pra}
and every 4-GBS set is LU-equivalent to one of the ten representative elements of these equivalence classes.
Then we only need to consider  the ten 4-GBS sets.
(c) The local discrimination of the ten 4-GBS sets can be determined
by \cite[Corollary 1]{band2011njp}, Theorem \ref{th3.1} and Yu et al.$^{,}$s result \cite[Lemma]{yu2015arxiv}.
Moreover we presented a protocol based on LU-equivalence (Lemma \ref{lem2.3}),
according to which all 156 standard 4-GBS sets are found (see Table \ref{tab4.1}).

\section{Necessary and sufficient conditions for GBS sets in $\mathbb{C}^{5}\otimes\mathbb{C}^{5}$}

In this section, we will show that if a GBS set $\mathcal{L}$ in $\mathbb{C}^{5}\otimes\mathbb{C}^{5}$
is one-way LOCC distinguishable then the condition (1) in Theorem \ref{th3.1} is true.
If the dimension $d$ is a prime,
it is easy to check that the condition (1) in Theorem \ref{th3.1} is  equivalent to a simple conclusion:
the cardinality of the set $\{m^{-1}n|(m,n)\in\Delta\mathcal{L}, m^{-1}n\triangleq \infty \hbox{ if }m=0\}$ is less than $d+1$.
The cardinality can be calculated directly from the difference set without solving the congruence equation \eqref{cogru2.1},
and then the necessary and sufficient conditions allow us
to quickly judge the one-way LOCC distinguishability  of any GBS set in $\mathbb{C}^{5}\otimes\mathbb{C}^{5}$.
For convenience, the set $\{m^{-1}n|(m,n)\in\Delta\mathcal{L}, m^{-1}n\triangleq \infty \hbox{ if }m=0\}$
can be called a $\it{discriminant\ index\ set}$ of $\mathcal{L}$.

\begin{theorem}\label{tm5.1}
Let $\mathcal{L}=\{(m_{i},n_{i})\}$ be a 4-GBS or 5-GBS set in  $\mathbb{C}^{5}\otimes\mathbb{C}^{5}$,
then $\mathcal{L}$ is one-way LOCC indistinguishable if and only if
the discriminant set $\mathcal{D}(\mathcal{L})$ is empty,
or equivalently the cardinality of the discriminant index set
$\{m^{-1}n|(m,n)\in\Delta\mathcal{L}, m^{-1}n\triangleq \infty \hbox{ if }m=0\}$ is 6.
\end{theorem}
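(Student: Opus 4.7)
My plan is to split the argument into three pieces: the equivalence of the two formulations in the statement, the easy ``sufficient'' half of the biconditional, and the substantive ``necessary'' half.

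For the equivalence, since $d=5$ is prime, every nonzero $(m,n)\in\mathbb{Z}_5^2$ has a well-defined projective class $m^{-1}n\in\mathbb{Z}_5\cup\{\infty\}$ (with the convention $m^{-1}n:=\infty$ when $m=0$), and applying Lemma \ref{lem2.2} to the commuting congruence $ns-mt\equiv 0\pmod 5$ shows that two nonidentity GPMs commute iff they share this class. Consequently, a nonidentity GPM $T$ belongs to $\mathcal{D}(\mathcal{L})$ precisely when its class is absent from the discriminant index set, so $\mathcal{D}(\mathcal{L})=\emptyset$ iff every one of the $d+1=6$ projective classes appears in $\Delta\mathcal{L}$. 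The sufficient direction of Theorem \ref{tm5.1} (nonempty $\mathcal{D}(\mathcal{L})$ implies one-way LOCC distinguishability) is then immediate from Theorem \ref{th3.1}(1) combined with Lemma \ref{lem2.1}, since any $T\in\mathcal{D}(\mathcal{L})$ supplies the required eigenbasis measurement for Bob.

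The substantive direction is the converse: if $\Delta\mathcal{L}$ meets every projective class, then $\mathcal{L}$ is one-way LOCC indistinguishable. I would mirror the strategy of Section IV in two steps. First, use the Protocol of Section IV together with Lemma \ref{lem2.3} and the rich (in particular transitive) Clifford action on $\mathbb{Z}_5\cup\{\infty\}$ to partition the standard $4$-GBS and $5$-GBS sets of $\mathbb{C}^5\otimes\mathbb{C}^5$ whose difference sets cover all six directions into a short list of LU-equivalence classes. Second, for each representative $\mathcal{L}$ invoke the standard reformulation that one-way LOCC distinguishability of a GBS ensemble reduces to the existence of a unit vector $|v\rangle\in\mathbb{C}^5$ with $\langle v|X^aZ^b|v\rangle=0$ for every $(a,b)\in\Delta\mathcal{L}$ (modulo the harmless projective symmetry $(a,b)\mapsto(a,-b)$ induced by transpositions on Bob's side), and rule out the existence of such a $|v\rangle$.

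The main obstacle is exactly this infeasibility step. Naive parameter counting (nine real degrees of freedom in $|v\rangle$ versus at most twenty quadratic constraints) is inconclusive because the constraints are highly dependent. My preferred route combines the Weyl--Parseval identity $\sum_{a,b\in\mathbb{Z}_5}|\langle v|X^aZ^b|v\rangle|^2=d=5$ with the mutually unbiased basis structure of $\mathbb{C}^5$: the six eigenbases corresponding to the six projective classes form a complete set of mutually unbiased bases, and forcing the characteristic function $\chi(a,b):=\langle v|X^aZ^b|v\rangle$ to vanish on at least one representative per class leaves a rigid pattern of nonzero $\chi$-values whose squared magnitudes must total $d-1=4$. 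A direction-by-direction analysis within each LU-orbit representative should then yield a contradiction. I expect this class-by-class analytic verification to be the most delicate ingredient; the combinatorial Clifford classification should, by contrast, be routine.
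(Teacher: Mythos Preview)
Your equivalence argument and the easy direction match the paper's treatment. For the substantive direction, however, you take a considerably harder route than the paper does.

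The paper never proves indistinguishability of any GBS set inside this proof. Instead it argues the contrapositive and imports the full LU-classification of $4$- and $5$-GBS sets in $\mathbb{C}^{5}\otimes\mathbb{C}^{5}$ from reference \cite{wang-yuan2021jmp}, which already records, for each of the $8$ (resp.\ $21$) equivalence classes, whether it is one-way LOCC distinguishable. Granted that classification, the paper's remaining task is trivial: for each of the $6$ (resp.\ $9$) \emph{distinguishable} representatives $H,K,L,\Gamma^{1}_{1,1},\Gamma^{1}_{1,2},\Gamma^{2}_{2,1}$ (resp.\ $H_1,H_2,K_1,K_2,K_4,K_{12},L_1,L_4,\Gamma_2$), it simply writes down $\Delta\mathcal{L}$, reads off the discriminant index set, and checks that its cardinality is at most $5$. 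No analysis of witness vectors is needed anywhere.

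Your plan, by contrast, proposes to regenerate both the LU-classification \emph{and} the indistinguishability verdicts. The classification step is indeed routine via the Protocol, but your Step~2---ruling out a vector $|v\rangle$ with $\langle v|X^aZ^b|v\rangle=0$ for all $(a,b)\in\Delta\mathcal{L}$ via the Weyl--Parseval identity and the MUB structure of $\mathbb{C}^5$---is essentially the content of the indistinguishability results that the paper cites, and you correctly flag it as delicate without carrying it out. Your approach is not wrong, and if completed it would yield a self-contained proof; but the paper's argument is much shorter precisely because it outsources this step to \cite{wang-yuan2021jmp}.
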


Since the dimension $5$ is a prime,
it is easy to check that each element  $m^{-1}n$ in Theorem \ref{tm5.1} corresponds to a number $y\in S_{\mathcal{L}_{y}}$,
and each number $y\in S_{\mathcal{L}_{y}}$ also corresponds to
an equivalent class of commutative pairs defined in \cite{yang2018qip}.
So, we get the conclusion as follows.

\begin{corollary}\label{cor5.1}
Let $\mathcal{L}=\{(m_{i},n_{i})|1\le i\le 5\}$ be a 4-GBS or 5-GBS set in  $\mathbb{C}^{5}\otimes\mathbb{C}^{5}$,
then the following assertions are equivalent to each other.
\begin{enumerate}
\item[{\rm(1)}] $\mathcal{L}$ is one-way LOCC indistinguishable.
\item[{\rm(2)}] The cardinality $|\{m^{-1}n|(m,n)\in\Delta\mathcal{L}, m^{-1}n\triangleq \infty \hbox{ if }m=0\}|$ is 6.
\item[{\rm(3)}] The cardinality $|S_{\mathcal{L}_{y}}|$ of the nonadmissible solution set $S_{\mathcal{L}_{y}}$ is 6.
\item[{\rm(4)}] Let  $S$ be a maximal set containing the pairwise noncommuting pairs in deference set,
then the cardinality $|S|$ is equal to 6.
\end{enumerate}
\end{corollary}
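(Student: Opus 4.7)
\emph{Plan.} Granting Theorem \ref{tm5.1}, the equivalence (1)$\Leftrightarrow$(2) is in hand, so only (2)$\Leftrightarrow$(3) and (2)$\Leftrightarrow$(4) remain. Both will be dispatched by elementary bijections on the discriminant index set, exploiting the fact that $d=5$ is prime so that the nonzero elements of $\mathbb{Z}_5$ form a group.

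For (2)$\Leftrightarrow$(3), I would show directly that the map $(m,n)\mapsto -m^{-1}n$ (with the convention $\infty\mapsto\infty$ when $m=0$) is a well-defined bijection from the discriminant index set onto the nonadmissible solution set $S_{\mathcal{L}_y}$. Indeed, for any difference $(m,n)=(m_i-m_j,n_i-n_j)\in\Delta\mathcal{L}$, the defining equation $m_iy+n_i\equiv m_jy+n_j\pmod{5}$ simplifies to $my\equiv -n\pmod{5}$: when $m\neq 0$ the unique solution in $\mathbb{Z}_5$ is $y=-m^{-1}n$, and when $m=0$ (forcing $n\neq 0$) the equation has no finite solution, so by convention the corresponding $y$ is $\infty\in S_A$. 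Since negation is a bijection on $\mathbb{Z}_5\cup\{\infty\}$, the two cardinalities agree, yielding (2)$\Leftrightarrow$(3).

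For (2)$\Leftrightarrow$(4), recall the commutation criterion: two nonidentity GPMs $(m,n)$ and $(s,t)$ commute iff $ns-mt\equiv 0\pmod{5}$, which (because $5$ is prime) is equivalent to their sharing the common index $m^{-1}n=s^{-1}t$ (with the $\infty$ convention when the first coordinate vanishes). Consequently, any pairwise noncommuting subset of $\Delta\mathcal{L}$ contains at most one element from each index class, and a \emph{maximal} such subset $S$ contains exactly one representative from every index class appearing in $\Delta\mathcal{L}$. Therefore $|S|$ equals the cardinality of the discriminant index set, giving (2)$\Leftrightarrow$(4).

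The main obstacle is really Theorem \ref{tm5.1} itself, from which this corollary is derived; once that theorem is granted, the remaining equivalences reduce to two one-line bijections together with the observation that commuting nonidentity GPMs in prime dimension are precisely those with equal index. The only care needed is the uniform treatment of the $\infty$ convention for differences whose first coordinate vanishes, so that the discriminant index set, $S_{\mathcal{L}_y}$, and the collection of index classes realized in $\Delta\mathcal{L}$ are all identified with subsets of the same six-element set $S_A=\mathbb{Z}_5\cup\{\infty\}$.
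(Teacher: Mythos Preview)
Your proposal is correct and follows essentially the same route as the paper. The paper's own justification is a single sentence preceding the corollary: since $5$ is prime, each element $m^{-1}n$ of the discriminant index set corresponds to a number $y\in S_{\mathcal{L}_y}$, and each such $y$ corresponds to an equivalence class of commutative pairs in $\Delta\mathcal{L}$; you have simply made these two bijections explicit (identifying the first as negation $m^{-1}n\mapsto -m^{-1}n$ on $\mathbb{Z}_5\cup\{\infty\}$, and the second via the commutation criterion $ns-mt\equiv 0$), which is exactly the content the paper leaves implicit.
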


Corollary \ref{cor5.1} implies that the following case $d=5$ of Wang et al' problem \cite[Problem]{wang2019pra} is true.

\begin{problem}\label{prob5.1}
Let $\mathcal{L}=\{X^{m_{i}}Z^{n_{i}}|1\le i\le l\}$ be a $l$-GBS set of dimension $d > 2$.
If $S_{A}\backslash S_{\mathcal{L}_{y}}=\emptyset$, is the set $\mathcal{L}$ indistinguishable by one-way LOCC?
\end{problem}

If $d$ is a prime, Problem \ref{prob5.1} is equivalent to the conjecture posed in \cite{yang2018qip}.

\begin{conjecture}\label{conj5.1}
Let $d$ be a prime. If $|S|=d+1$,
then the set $\mathcal{L}$ cannot be distinguished by one-way LOCC.
\end{conjecture}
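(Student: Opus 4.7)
The plan is to reduce Conjecture \ref{conj5.1} to a ``magic vector'' non-existence statement in $\mathbb{C}^d$, paralleling the argument underlying Theorem \ref{tm5.1}, and then settle the reduced statement by Fourier analysis on mutually unbiased bases. First I would invoke the standard characterization: the GBS set $\mathcal{L}=\{|\Phi_{U_k}\rangle\}$ is one-way LOCC distinguishable iff there is a POVM $\{A_i\}$ on Alice's side with $\sum_i A_i=I$ such that for each outcome $i$ with $A_i\neq 0$ and each pair $k\neq l$, Bob's conditional states $\rho_B^{(i,k)}=U_k A_i^T U_k^\dagger/d$ are mutually orthogonal. Since $\mathrm{Tr}(\rho_B^{(i,k)})=\mathrm{Tr}(A_i)/d$ does not depend on $k$, every state has equal probability at each outcome, so the orthogonality must hold for all pairs simultaneously. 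A direct expansion of $\mathrm{Tr}(\rho_B^{(i,k)}\rho_B^{(i,l)})=0$ then gives $A_i^{1/2}WA_i^{1/2}=0$ for every $W$ of the form $U_k^\dagger U_l$, equivalently for every GPM $W$ in the (phase-equivalent) difference set $\tilde{\Delta}\mathcal{L}$.

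Second, I would reduce to a rank-one statement: for any unit $|v\rangle\in\mathrm{ran}(A_i)$ the operator condition $A_i^{1/2}WA_i^{1/2}=0$ forces $\langle v|W|v\rangle=0$ for all $W\in\tilde{\Delta}\mathcal{L}$. Thus distinguishability requires a nonzero $|v\rangle$ annihilated in expectation by every GPM in the difference set, and the contrapositive of the conjecture becomes: for $d$ prime, if $\tilde{\Delta}\mathcal{L}$ contains at least one nontrivial GPM in each of the $d+1$ directions of $\mathbb{Z}_d^2$, then no such $|v\rangle$ exists. For $d$ prime these directions correspond one-to-one with the $d+1$ mutually unbiased bases of $\mathbb{C}^d$ (the eigenbases of the cyclic subgroups of the GPM group); writing $p_a^{(j)}=|\langle v|a_j\rangle|^2$ on the $j$-th MUB, the number $\langle v|T_j^k|v\rangle$ is (up to a permutation of Fourier modes) the discrete Fourier transform of $p^{(j)}$. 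The vanishing from direction $j$ is then a single Fourier-mode equation on $p^{(j)}$, yielding $d+1$ complex constraints on the $2d-2$ real projective parameters of $|v\rangle$, an overdeterminacy of $4$ real equations suggesting that no common solution exists.

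The main obstacle is converting this dimension count into an actual proof for every prime. The constraints are polynomial and far from generic, and the $d=5$ case underlying Theorem \ref{tm5.1} is settled essentially by exhaustive verification on the nonadmissible solution set. For general prime $d$ my approach would be to expand the rank-one projector $|v\rangle\langle v|=(1/d)\sum_{(m,n)}\overline{\langle v|X^mZ^n|v\rangle}X^mZ^n$, impose that the $d+1$ specified non-identity Fourier coefficients vanish, and then exploit positive semidefiniteness together with the rank-one identity $\mathrm{Tr}\bigl((|v\rangle\langle v|)^2\bigr)=1$ to force a contradiction. The transitive action of the single-qudit Clifford group on the $d+1$ directions allows me to put the constraint from one chosen direction in a canonical form (for instance, the vanishing of a $Z$-direction Fourier mode forces a specific marginal on the computational basis), which should then interact rigidly with the constraints coming from the remaining $d$ directions. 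I expect the decisive technical ingredient to be a rigidity property of the full set of $d+1$ MUBs in prime dimension, drawn from the Heisenberg--Weyl discrete phase-space formalism and the incidence geometry of the $d+1$ lines through the origin in $\mathbb{Z}_d^2$; absent such a rigidity statement, only case-by-case verifications in the spirit of Theorem \ref{tm5.1} appear feasible.
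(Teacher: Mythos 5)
There is a genuine gap, and in fact the statement you are trying to prove for all primes is known to be \emph{false} in that generality: the paper itself records (citing Wang et al., Phys.\ Rev.\ A \textbf{99}, 022307) a counterexample in $\mathbb{C}^{101}\otimes\mathbb{C}^{101}$, i.e.\ a GBS set whose difference set meets all $d+1$ directions of $\mathbb{Z}_d^2$ (equivalently $|S|=d+1$) and which nevertheless \emph{is} one-way LOCC distinguishable. Your first two reduction steps are essentially sound and consistent with the paper's framework (Lemma \ref{lem2.1} and the discriminant set $\mathcal{D}(\mathcal{S})$): one-way distinguishability of a GBS set is equivalent to the existence of a single ``magic'' vector $|v\rangle$ with $\langle v|W|v\rangle=0$ for every $W$ in the difference set. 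But the reduced claim you then need --- that for prime $d$ no such $|v\rangle$ exists once all $d+1$ MUB directions are represented --- is exactly what the $d=101$ counterexample refutes: there the $d+1$ Fourier-mode constraints \emph{are} simultaneously satisfiable. Your overdeterminacy count ($2(d+1)$ real equations versus $2d-2$ projective parameters, excess $4$) is only a genericity heuristic for a highly non-generic polynomial system, and it cannot be upgraded to a proof; the hoped-for ``rigidity property'' of the complete MUB set does not hold. You also acknowledge that this decisive step is not established, so even setting aside the counterexample the proposal is a plan rather than a proof.

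For comparison, the paper does not prove the conjecture at all; it only establishes the single case $d=5$ (Theorem \ref{tm5.1} and Corollary \ref{cor5.1}), and it does so by a finite, case-by-case argument: every 4-GBS or 5-GBS set in $\mathbb{C}^{5}\otimes\mathbb{C}^{5}$ is LU-equivalent to one of finitely many representatives whose one-way distinguishability is known from prior classification work, and for each one-way distinguishable representative the cardinality of the discriminant index set is computed directly and found to be less than $6$. If you want a correct contribution here, the realistic targets are (i) the $d=5$ case by such exhaustive verification, or (ii) identifying for which primes the magic-vector system becomes solvable --- not a uniform proof of the conjecture, which does not exist.
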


When $d$ is a composite number,  it obvious that Problem \ref{prob5.1} is not true (see Examples \ref{ex3.1}-\ref{ex3.3}).
When $d$ is a prime,  Wang et al. \cite{wang2019pra} disproved the case $d=101$ of Problem \ref{prob5.1} or Conjecture \ref{conj5.1}
by finding a counterexample in $\mathbb{C}^{101}\otimes\mathbb{C}^{101}$.
The counterexample suggests that Problem \ref{prob5.1} (or Conjecture \ref{conj5.1}) may not be true.
But the necessary and sufficient conditions (Theorem \ref{tm5.1} or Corollary \ref{cor5.1}) show that
the case $d=5$ of Problem \ref{prob5.1} is true,
so this is a surprising result, and it suggests that the problem is more complex than expected.
Now we give a proof of Theorem \ref{tm5.1}.

\begin{proof}
Let $\mathcal{L}=\{(m_{i},n_{i})\}$ be a one-way LOCC distinguishable 4-GBS  or 5-GBS  set,
it is sufficient to show that the cardinality of the discriminant index set
$\{m^{-1}n|(m,n)\in\Delta\mathcal{L}, m^{-1}n\triangleq \infty \hbox{ if }m=0\}$ is less than 6.

It is known that all $C_{25}^{4}$(=12650) quadruples of GBSs can be classified into eight locally inequivalent
classes and the representative elements of these equivalence classes are
$H=\{ I, Z, Z^2, Z^3 \}, K=\{ I, Z, Z^2, X \},
L=\{ I, Z, Z^2, X^2 \}, \Gamma ^1_{1,1}, \Gamma ^1_{1,2}, \Gamma ^2_{2,1}, \Gamma ^1_{4,4}, \Gamma ^1_{2,2},$
where $ \Gamma ^a_{st} =  \{ I, Z, X^a, X^sZ^t \}$.
The first 6 sets are one-way LOCC distinguishable, while
the last 2 sets are one-way LOCC indistinguishable \cite{wang-yuan2021jmp}.
Since the set $\mathcal{L}$ is a one-way LOCC distinguishable,
it is LU-equivalent to one of the six 4-GBS sets  $H, K, L, \Gamma ^1_{1,1}, \Gamma ^1_{1,2}, \Gamma ^2_{2,1}$.
Because of the difference sets
$\Delta\mathcal{H}=\{(0,1),(0,2),(0,3),(0,4)\}$ and
$\Delta\mathcal{K}=\{(0,1),(0,2),(1,0),(1,4),(1,3),(0,4),(0,3),(4,0),\\(4,1),(4,2)\}$,
the cardinality of the corresponding discriminant index sets is 1 and 4 respectively.
Similarly, the cardinality of the discriminant index sets of $L, \Gamma ^1_{1,1}, \Gamma ^1_{1,2}$ and $\Gamma ^2_{2,1}$
is 4, 4, 5 and 4 respectively.
Therefore, the cardinality of the six sets is less than 6.

It is known that all $C_{25}^{5}$(=53130) 5-GBS sets can be divided into 21 locally inequivalent
classes and the representative elements of these equivalence classes are
\begin{widetext}
\begin{align*}
&H_1=\{ I, Z, Z^2, Z^3, Z^4 \},  H_2=\{ I, Z, Z^2, Z^3, X \};\ K_1=\{ I, Z, Z^2, X, X^2 \},  K_2=\{ I, Z, Z^2, X, X^3 \},\\
&K_3=\{ I, Z, Z^2, X, X^4 \},  K_4=\{ I, Z, Z^2, X, XZ \},K_5=\{ I, Z, Z^2, X, XZ^2 \},K_6=\{ I, Z, Z^2, X, X^2Z \},\\
&K_7=\{ I, Z, Z^2, X, X^2Z^2 \},  K_8=\{ I, Z, Z^2, X, X^3Z^2 \},K_9=\{ I, Z, Z^2, X, X^3Z^3 \},\\
&K_{10}=\{ I, Z, Z^2, X, X^3Z^4 \},K_{11}=\{ I, Z, Z^2, X, X^4Z \},  K_{12}=\{ I, Z, Z^2, X, X^4Z^2 \};\\
&L_1=\{ I, Z, Z^2, X^2, X^2Z \}, L_2=\{ I, Z, Z^2, X^2, X^2Z^2 \},L_3=\{ I, Z, Z^2, X^2, X^3Z \},\\
&L_4=\{ I, Z, Z^2, X^2, X^3Z^2 \};\Gamma _1= \{ I, Z, X, XZ^2, X^2Z \},  \Gamma _2=\{ I, Z, X, XZ^2, X^3Z^2 \},\\
&\Gamma _3=\{ I, Z, X, XZ^2, X^4Z \}.
\end{align*}
\end{widetext}
Among them nine sets
$H_1,H_2, K_1, K_2, K_4, K_{12}, L_1, L_4, \Gamma _2$
are one-way LOCC distinguishable (also two-way LOCC indistinguishable, see \cite{wang-yuan2021jmp}).
If the set $\mathcal{L}$ is a one-way LOCC distinguishable 5-GBS set,
then it is LU-equivalence to one of the nine 5-GBS sets.
Using a similar method as the 4-GBS sets,
we can deduce that the cardinality of the discriminant index sets of the nine 5-GBS sets is
1, 5, 5, 5, 5, 4, 5, 4 and 5 respectively.
So all of the nine 5-GBS sets have cardinality less than 6.
\end{proof}

In this section, we showed that, for each 4-GBS or 5-GBS set $\mathcal{L}=\{(m_{i},n_{i})\}$ residing in $\mathbb{C}^{5}\otimes\mathbb{C}^{5}$,
the one-way LOCC discrimination of $\mathcal{L}$ is completely determined by
the cardinality of $\it{discriminant\ index\ set}$ (=$\{m^{-1}n|(m,n)\in\Delta\mathcal{L}, m^{-1}n\triangleq \infty \hbox{ if }m=0\}$).
This simple and computable cardinality according to the following three facts:
(a) The cardinality is invariant under LU-equivalence.
(b) Each of the $C_{25}^{4}$(=12650) 4-GBS sets is LU-equivalent to one of the representative elements of eight equivalence classes,
and each of the $C_{25}^{5}$(=53130) 5-GBS sets is LU-equivalent to one of  the representative elements of 21 equivalence classes \cite{wang-yuan2021jmp}.
(c) The local discrimination of the representative elements can be determined by
\cite[Corollary 1]{band2011njp}, Theorem \ref{th3.1} and Yu et al.$^{,}$s result \cite[Lemma]{yu2015arxiv}.
Since \cite[Corollary 1]{band2011njp} and Theorem \ref{th3.1} are used to deal with the problem of one-way local discrimination,
and Yu et al.$^{,}$s result \cite[Lemma]{yu2015arxiv} can only be used to deal with the local discrimination of d-GBS sets in $\mathbb{C}^{d}\otimes\mathbb{C}^{d}$,
the problem of LOCC discrimination of 4-GBS sets in  $\mathbb{C}^{5}\otimes\mathbb{C}^{5}$ still remains unsolved.

Based on the above achievements in local discrimination of GBS sets in $\mathbb{C}^{4}\otimes\mathbb{C}^{4}$ and $\mathbb{C}^{5}\otimes\mathbb{C}^{5}$,
we may wish to consider the problem of local discrimination of GBS sets in $\mathbb{C}^{6}\otimes\mathbb{C}^{6}$,
specifically, according to the existing results, we only need to consider the cases of 4-GBS, 5-GBS and 6-GBS sets.
But it is difficult for us to solve this problem according to two basic facts:
(a) In $\mathbb{C}^{6}\otimes\mathbb{C}^{6}$, there are $C_{36}^{4}$(=58905) 4-GBS sets,
$C_{36}^{5}$(=376992) 5-GBS sets and $C_{36}^{6}$(=1947792) 6-GBS sets.
The number of sets to be considered is very large, so it is difficult to obtain comprehensive results.
(b) In order to deal with a large number of GBS sets, in general, it is necessary to classify these sets by using methods such as LU-equivalence.
However, such classification is difficult and has not been completed.
In short, this problem of local discrimination of GBS sets in $\mathbb{C}^{6}\otimes\mathbb{C}^{6}$ is interesting and valuable,
but it is difficult for us to solve it for the time being.

\section{Conclusions}
It is well known that,  for a  bipartite quantum system $\mathbb{C}^{d}\otimes\mathbb{C}^{d}$,
it is hard to locally distinguish a $k$-GBS set $(4\leq k\le d)$.
In this work, firstly by using Weyl commutation relation and difference set,
three practical sufficient conditions are given and several known results can be deduced as special cases of these conditions.
Based on LU equivalence,
it is shown that the three conditions are also necessary for local discrimination of GBS sets in $\mathbb{C}^{4}\otimes\mathbb{C}^{4}$,
and all 156 local indistinguishable standard 4-GBS sets are found (see Table II for details).
That is, the problem of local discrimination of GBS sets in $\mathbb{C}^{4}\otimes\mathbb{C}^{4}$ is completely solved.
In $\mathbb{C}^{5}\otimes\mathbb{C}^{5}$,
it is shown that a GBS set $\mathcal{L}$ is one-way LOCC indistinguishable if and only if
the cardinality of the discriminant index set $\{m^{-1}n|(m,n)\in\Delta\mathcal{L}, m^{-1}n\triangleq \infty \hbox{ if }m=0\}$ is 6.
Since the cardinality of the discriminant index set is easy to calculate,
and a general LOCC  has no advantage over one-way LOCC in distinguishing the 5-GBS sets,
the problem of LOCC discrimination of 5-GBS sets in $\mathbb{C}^{5}\otimes\mathbb{C}^{5}$ also are completely solved.
The necessary and sufficient condition also shows that
the case $d=5$ of the problem proposed by Wang et al. (Phys Rev A 99:022307, 2019) has a positive answer.
Despite these advances,
the problem of LOCC discrimination of 4-GBS sets in  $\mathbb{C}^{5}\otimes\mathbb{C}^{5}$ still remains unsolved.
It is hoped that the problem will be solved in the near future.

\begin{acknowledgments}
This work is supported by NSFC (Grant No. 11971151, 11901163) and Wuxi University Research Start-up Fund for Introduced Talents.
\end{acknowledgments}

\bibliographystyle{quantum}

\begin{thebibliography}{}
\bibitem{benn1999pra}
C. H. Bennett, D. P. DiVincenzo, C. A. Fuchs, T. Mor, E. Rains, P. W. Shor, J. A. Smolin, and W. K. Wootters,
\href{https://doi.org/10.1103/PhysRevA.59.1070}{Phys. Rev. A \textbf{59}, 1070 (1999).}
\bibitem{walg2000prl}
J. Walgate, A. J. Short, L. Hardy, and V. Vedral,
\href{https://doi.org/10.1103/PhysRevLett.85.4972}{Phys. Rev. Lett. \textbf{85}, 4972 (2000).}
\bibitem{walg2002prl}
J. Walgate and L. Hardy,
\href{https://doi.org/10.1103/PhysRevLett.89.147901}{Phys. Rev. Lett. \textbf{89}, 147901 (2002).}
\bibitem{gho2001prl}
S. Ghosh, G. Kar, A. Roy, A. S. Sen (De), and U. Sen,
\href{https://doi.org/10.1103/PhysRevLett.87.277902}{Phys. Rev. Lett. \textbf{87}, 277902 (2001).}
\bibitem{horo2003prl}
M. Horodecki, A. Sen(De), U. Sen, and K. Horodecki,
\href{https://doi.org/10.1103/PhysRevLett.90.047902}{Phys. Rev. Lett. \textbf{90}, 047902 (2003).}
\bibitem{fan2004prl}
H. Fan,
\href{https://doi.org/10.1103/PhysRevLett.92.177905}{Phys. Rev. Lett. \textbf{92}, 177905 (2004).}
\bibitem{fan2007pra}
H. Fan,
\href{https://doi.org/10.1103/PhysRevA.75.014305}{Phys. Rev. A \textbf{75}, 014305 (2007).}
\bibitem{gho2004pra}
S. Ghosh, G. Kar, A. Roy, and D. Sarkar,
\href{https://doi.org/10.1103/PhysRevA.70.022304}{Phys. Rev. A \textbf{70}, 022304 (2004).}
\bibitem{nath2005jmp}
M. Nathanson,
\href{https://doi.org/10.1063/1.1914731}{J. Math. Phys. \textbf{46}, 062103 (2005).}
\bibitem{wang2017qip}
Y. L. Wang, M. S. Li, S. M. Fei and Z. J. Zheng,
\href{https://doi.org/10.1007/s11128-017-1579-x}{Quant. Info. Proc. \textbf{16}, 126 (2017).}
\bibitem{divi2002trans-inform}
D. P. DiVincenzo, D. W. Leung, and B. M. Terhal,
\href{https://doi.org/10.1109/18.985948}{IEEE Trans. Inf. Theory \textbf{48}, 3 (2002).}
\bibitem{raha2015pra}
R. Rahaman and M. G. Parker,
\href{https://doi.org/10.1103/PhysRevA.91.022330}{Phys. Rev. A \textbf{91}, 022330 (2015).}
\bibitem{yang2015sr}
Y. H. Yang, F. Gao, X. Wu, S. J. Qin, H. J. Zuo, and Q. Y. Wen,
\href{https://doi.org/10.1038/srep16967}{Sci. Rep. \textbf{5}, 16967 (2015).}
\bibitem{wei2016pra}
C. Y. Wei, T. Y. Wang, and F. Gao,
\href{https://doi.org/10.1103/PhysRevA.93.042318}{Phys. Rev. A \textbf{93}, 042318 (2016).}
\bibitem{wei2018trans-comp}
C. Y. Wei, X. Q. Cai, B. Liu, T. Y. Wang, and F. Gao,
\href{https://doi.org/10.1109/TC.2017.2721404}{IEEE Trans. Comput. \textbf{67}, 1 (2018).}
\bibitem{nath2013pra}
M. Nathanson,
\href{https://doi.org/10.1103/PhysRevA.88.062316}{Phys. Rev. A \textbf{88}, 062316 (2013).}
\bibitem{band2011njp}
S. Bandyopadhyay, S. Ghosh, and G. Kar,
\href{https://doi.org/10.1088/1367-2630/13/12/123013}{New J. Phys. \textbf{13}, 123013 (2011).}
\bibitem{yu2012prl}
N. K. Yu, R. Y. Duan, and M. S. Ying,
\href{https://doi.org/10.1103/PhysRevLett.109.020506}{Phys. Rev. Lett. \textbf{109}, 020506 (2012).}
\bibitem{zhang2015pra}
Z. C. Zhang, K. Q. Feng, F. Gao and Q. Y. Wen,
\href{https://doi.org/10.1103/PhysRevA.91.012329}{Phys. Rev. A \textbf{91}, 012329 (2015).}
\bibitem{wang2016qip}
Y. L. Wang, M. S. Li, Z. J. Zheng and S. M. Fei,
\href{https://doi.org/10.1007/s11128-016-1243-x}{Quant. Info. Proc. \textbf{15}, 1661 (2016).}
\bibitem{yang2018pra}
Y. H. Yang, J. T. Yuan, C. H. Wang and S. J. Geng,
\href{https://doi.org/10.1103/PhysRevA.98.042333}{Phys. Rev. A \textbf{98}, 042333 (2018).}
\bibitem{yuan2019qip}
J. T. Yuan, C. H. Wang, Y. H. Yang and S. J. Geng,
\href{https://doi.org/10.1007/s11128-019-2257-y}{Quant. Info. Proc. \textbf{18}, 145 (2019).}
\bibitem{wang2019pra}
Y. L. Wang, M. S. Li and Z. X. Xiong,
\href{https://doi.org/10.1103/PhysRevA.99.022307}{Phys. Rev. A \textbf{99}, 022307 (2019).}
\bibitem{tian2015pra}
G. J. Tian, S. X. Yu, F. Gao, Q. Y. Wen and C. H. Oh,
\href{https://doi.org/10.1103/PhysRevA.91.052314}{Phys. Rev. A 91, 052314 (2015).}
\bibitem{yang2021qip}
Y. H. Yang, G. F. Mu, J. T. Yuan and C. H. Wang,
\href{https://doi.org/10.1007/s11128-021-02990-9}{Quant. Info. Proc. \textbf{20}, 52 (2021).}
\bibitem{yang2018qip}
Y. H. Yang, C. H. Wang, J. T. Yuan, X. Wu and H. J. Zuo,
\href{https://doi.org/10.1007/s11128-017-1797-2}{Quant. Inf. Process. \textbf{17}, 29 (2018).}
\bibitem{petz2008book}
D. Petz, ``Quantum Information Theory and Quantum Statistics”,
\href{https://doi.org/10.1007/978-3-540-74636-2}{Springer, (2006).}
\bibitem{yuan2020jpa}
J. T. Yuan, Y. H. Yang and C. H. Wang,
\href{https://doi.org/10.1088/1751-8121/abc43b}{J. Phys. A: Math. Theor. \textbf{53}, 505304 (2020).}
\bibitem{wang-yuan2021jmp}
C. H. Wang, J. T. Yuan, Y. H. Yang and G. F. Mu,
\href{https://doi.org/10.1063/5.0029164}{J. Math. Phys. \textbf{62}, 032203 (2021).}
\bibitem{tian2016pra}
G. J. Tian, S. X. Yu, F. Gao, Q. Y. Wen and C. H. Oh,
\href{https://doi.org/10.1103/PhysRevA.94.052315}{Phys. Rev. A \textbf{94}, 052315 (2016).}
\bibitem{wu-tian2018pra}
B. J. Wu, J. Q. Jiang, J. L. Jiang, G. J. Tian and S. X. Ming,
\href{https://doi.org/10.1103/PhysRevA.98.022304}{Phys. Rev. A \textbf{98}, 022304 (2018).}
\bibitem{hos2005pra}
E. Hostens, J. Dehaene and B. De Moor,
\href{https://doi.org/10.1103/PhysRevA.71.042315}{Phys. Rev. A \textbf{71}, 042315 (2005).}
\bibitem{far2014jpa}
J. M. Farinholt,
\href{https://doi.org/10.1088/1751-8113/47/30/305303}{J. Phys. A: Math. Theor.  \textbf{47}, 305303 (2014).}
\bibitem{yu2015arxiv}
S. X. Yu and C. H. Oh,
\href{https://doi.org/10.48550/arXiv.1502.01274}{arXiv: 1502.01274.}

\end{thebibliography}

\end{document}